\newcommand{\BlackBox}{\rule{1.5ex}{1.5ex}}  
    \renewenvironment{proof}{\par\noindent{\bf Proof\ }}{\hfill\BlackBox\\[2mm]}
    \newenvironment{proof}{\par\noindent{\bf Proof\ }}{\hfill\BlackBox\\[2mm]}
\theoremstyle{plain}
\newtheorem{theorem}{Theorem}
\theoremstyle{remark}
\newtheorem{definition}[theorem]{Definition}
\DeclareMathOperator*{\argmin}{arg\,min}
\newcommand{\ind}{\perp\!\!\!\!\perp}
\renewcommand\journal@name{}
\begin{document}

\begin{frontmatter}
\newcommand{\mytitle}{Neural Posterior Estimation with Autoregressive Tiling for Detecting Objects in Astronomical Images}
\title{\mytitle}
\runtitle{Autoregressive Tiling for Astronomical Images}

\begin{aug}
\author[A]{\fnms{Jeffrey}~\snm{Regier}\ead[label=e1]{regier@umich.edu}\orcid{0000-0002-1472-5235}}\\
\address[A]{Department of Statistics, University of Michigan\printead[presep={ ,\ }]{e1}}
\end{aug}

\begin{abstract}
Upcoming astronomical surveys will produce petabytes of high-resolution images of the night sky, providing information about billions of stars and galaxies. Detecting and characterizing the astronomical objects in these images is a fundamental task in astronomy---and a challenging one, as most of these objects are faint and many visually overlap with other objects. We propose an amortized variational inference procedure to solve this instance of small-object detection. Our key innovation is a family of spatially autoregressive variational distributions that partition and order the latent space according to a $K$-color checkerboard pattern. By construction, the conditional independencies of this variational family mirror those of the posterior distribution. We fit the variational distribution, which is parameterized by a convolutional neural network, using neural posterior estimation (NPE) to minimize an expectation of the forward KL divergence. Using images from the Sloan Digital Sky Survey, our method achieves state-of-the-art performance. We further demonstrate that the proposed autoregressive structure greatly improves posterior calibration.
\end{abstract}

\begin{keyword}
\kwd{variational inference}
\kwd{amortized inference}
\kwd{simulation-based inference}
\kwd{likelihood-free inference}
\kwd{object detection}
\kwd{computer vision}
\end{keyword}

\end{frontmatter}

\section{Introduction}
The next generation of astronomical surveys will produce an unprecedented quantity of high-resolution astronomical images. The Euclid mission, for example, is projected to generate 30 petabytes of image data during its six-year lifespan \citep{romelli2019euclidizing}. Using astronomical images from upcoming surveys, cosmologists will seek to resolve many scientific questions about the nature of the Universe through studies that characterize dark energy and dark matter, map large-scale structure, and search for new types of astronomical objects.

Cataloging astronomical objects (light sources) is a foundational step in analyzing astronomical images. Catalog entries contain information about where an object is located, as well as its type (e.g., star or galaxy), brightness, color, and shape. 
To date, astronomical cataloging has been performed primarily using algorithmic software pipelines with stages that detect objects, segment overlapping objects, and measure properties of interest. 
The stages are generally not based on a coherent statistical model and do not communicate uncertainty information between stages. 
The detection stages rely on peak-finding routines that are applied to the convolution of the image with a matched filter~\citep{lupton2001sdss,morganson2018dark,bosch2018overview}.
This approach does not account for the uncertainty of detection in a coherent way.

In any survey, the majority of objects are faint in terms of signal-to-noise ratio.
Because these faint objects cannot be detected with absolute certainty, the lack of uncertainty quantification in cataloging pipelines is problematic. To date, astronomical analyses have coped with this lack of detection uncertainty using selection criteria (known as ``cuts'') that limit the cohort of studied objects to those with an estimated flux (brightness) in excess of some conservative threshold, thus ensuring so few cataloging errors (both omissions and commissions) within the targeted flux range that the possibility of error can be ignored in downstream analyses \citep[e.g.,][]{strauss2002spectroscopic,blanton2003broadband}.
This strategy is data inefficient because such a conservative flux threshold is required to ensure that the number of cataloging errors is negligible. Moreover, by restricting analyses to only the brightest, most easily detectable sources, we systematically exclude entire classes of astronomical objects that, while individually uncertain, could reveal important insights when studied in aggregate with proper uncertainty quantification.

The scientific cost of these exclusions is substantial.
In weak lensing studies, higher redshift galaxies provide the greatest statistical power for constraining cosmological parameters because they are lensed by more intervening dark matter, and probe earlier cosmic epochs when dark energy constituted a smaller fraction of the Universe's energy density, yet these scientifically valuable galaxies tend to be fainter and are systematically excluded by conservative flux cuts \citep[e.g.,][]{mandelbaum2018weak}. Similarly, these distant galaxies appear ``younger'' because their light reaching us now originated when the Universe was smaller, making them valuable probes of galaxy evolution \citep[e.g.,][]{conselice2014evolution}, but their faintness renders them inaccessible to traditional cataloging approaches.

As telescope hardware improves, selection criteria that ensure a negligible error rate at the level of individual objects become increasingly limiting.
In deeper images, more photoelectrons are recorded in each pixel, so the number of imaged objects (that is, those contributing photoelectrons to the images) increases. As the density of imaged objects increases, a greater number are likely to overlap.
The visual overlap of objects, which is known in astronomy as blending, has become more pronounced as image depth increases. 
For example, blending is estimated to occur in 58\% of the imaged objects captured by Subaru Telescope's Hyper Suprime-Cam \citep{bosch2018hyper} and 62\% of the imaged objects captured by the Rubin Observatory Legacy Survey of Space and Time (LSST) \citep{sanchez2021effects}.
With increased blending of objects in survey images, the problem of decomposing an image into objects becomes more ambiguous (i.e., more thoroughly ill-posed), which constitutes a major problem for traditional approaches to cataloging~\citep{melchior2021challenge}.
Contaminated measurements propagate into estimates about populations of stars and galaxies, biasing them. Errors due to blending are expected to be among the leading sources of bias for future surveys.

\paragraph*{Probabilistic cataloging}

Probabilistic cataloging is a potential solution to the problem of blending and detection ambiguity in general. Instead of producing a point estimate of the catalog through algorithmic means, probabilistic catalogs are based on a generative model of images. Given a collection of images, which are modeled as observed random variables, the posterior distribution of the catalog accounts for visually overlapping objects. No special case is required to accommodate blends.

Some existing approaches to probabilistic cataloging have used Markov chain Monte Carlo (MCMC) for approximate Bayesian inference \citep{brewer2013probabilistic, portillo2017improved, feder2020multiband, buchanan2023markov}.
A major limitation of these approaches is that the sampling procedure for MCMC is too slow to process entire modern astronomical surveys \citep{regier2019approximate,liu2023variational}.
Each iteration of an MCMC sampler requires a likelihood evaluation that involves a large number of pixels (i.e., petabytes for the volume of data generated in upcoming surveys). Even with clever sampler strategies, adequate mixing can require more iterations than are feasible.

Variational inference (VI) can be a computationally efficient alternative to MCMC \citep{blei2017variational, zhang2018advances}. VI uses numerical optimization to find a posterior approximation from a class of candidate approximations.
One challenge in using variational inference for object detection is that the variational objective can be difficult to optimize: the gradient of the posterior density gives little or no guidance in how to improve the positions of inferred objects (according to the current iterate) that are not sufficiently close to an imaged object \citep{liu2023variational}.
Another challenge in using variational inference for object detection is that the posterior is transdimensional because the number of objects is ambiguous, and a reasonable variational distribution requires the flexibility to model this ambiguity.

\paragraph*{Probabilistic cataloging with tiling}

\citet{liu2023variational} proposed a VI method for detecting stars, called StarNet, which in many respects overcomes both challenges. StarNet decomposes the problem of detecting objects in large images into subproblems. Each subproblem involves performing posterior inference for the objects centered in a small region of the sky, called a \textit{tile}, based on a relevant subimage.
\citet{liu2023variational} defines tiles, using the coordinates of the image, to be disjoint $2\times 2$-pixel regions of sky. The subimage for each tile-based subproblem is a $32\times 32$-pixel image centered on the tile. (Objects typically extend far beyond the boundaries of the tile in which they are centered.) Whereas the tiles form a partition of the sky, their subimages overlap with each other.

In StarNet, these tile-based inference subproblems are solved efficiently through amortized variational inference, with a variational distribution that factorizes fully over tiles: the objects in each tile are independent under the variational distribution. 
The variational distribution is parameterized by a neural network, which maps each subimage to the distributional parameters for a tile.
For each tile, the variational distribution has support for only a small number of objects (e.g., four). This is not a major limitation given the small tile size, yet it greatly simplifies transdimensional inference.

\paragraph*{The problem with independent tiling}
For object-detection models, in the exact posterior, nearby objects are often highly dependent.
For example, the detection of an object will typically have a suppressive effect (negative dependence) on the detection of a nearby object: because the former detection ``explains away'' a bright region of an image, a second nearby object is less likely. Therefore, StarNet's assumption of independence in the variational distribution among neighboring tiles is limiting.

This limitation can be quite severe in practice. Consider a variational distribution with 2$\times$2-pixel tiles. If objects are uniformly distributed spatially, then 19\% of objects are expected to be centered within 0.2 pixels of the border. 
For an object within 0.2 pixels of a tile border, it is highly ambiguous which tile contains its centroid (see \cref{fig:moving_star,fig:border-experiment} for examples).
With independent tiling, in the vicinity of this object, many samples from the variational distribution would therefore incorrectly include either two objects or no objects.
Even if this miscalibration of the variational distribution affects a small percentage of objects, it can invalidate downstream analyses, which typically consider populations of objects in aggregate.

This problem is not solved by simply increasing the tile size because in StarNet, the objects detected within a single tile are also treated as independent under the variational distribution given the total number of objects in the tile.
Further, as tile size grows, the subproblems become more difficult to solve, negating the benefit of decomposing the original full-image inference problem in the first place.

Despite this limitation, StarNet outperformed a well-tuned MCMC sampler both in terms of accuracy and computational efficiency, suggesting that a tiling-based approach has important advantages over alternative approaches.

\paragraph*{Our contributions}
First, we formalize the inference problem addressed by tiling, as previous work did not rigorously connect the solution to the original inference problem (i.e., inferring the posterior given a large image) with the solutions to the tiling subproblems, which have support for a bounded number of objects. Using the thinning property of the Poisson distribution, we recast our generative model, which is a marked spatial Poisson process (\cref{sec:model}), as a distribution that fully factorizes over tiles. With this new representation of the latent space, the solutions to the tiling subproblems can be shown to constitute a particularly important marginal of the posterior distribution (\cref{sec:change-of-variables}).

Second, we propose a novel spatially autoregressive family of variational distributions (\cref{sec:variational_family}).
We assign each tile a ``rank'' according to a $K$-color checkerboard pattern so that neighboring tiles have different ranks. Our variational family factorizes spatially over these ranks, which induce a partial ordering of the tiles.
The variational family is amortized, with distributional parameters produced by a fully convolutional inference network. If the receptive fields of this network are set optimally, then, by construction, the conditional independence assertions of the variational family mirror those of the generative model. This property of the variational distribution makes it flexible enough to approximate the posterior well, but not more flexible than is required to do so, so that training can focus on aspects of the posterior that are not known a priori.

Third, we consider neural posterior estimation (NPE) as an alternative to traditional variational inference (\cref{sec:fitting}).
Whereas traditional VI maximizes the evidence lower bound (ELBO), and therefore minimizes the Kullback-Leibler (KL) divergence from the variational distribution to the exact posterior, NPE maximizes the information lower bound derived in \citet{barber2004im}, which involves maximizing a particular expectation of the KL divergence from the exact posterior to its approximation. (Recall that the KL divergence is not symmetric.)
NPE is equivalent to the reweighted wake-sleep algorithm \citep{bornschein2015reweighted} performed without wake-phase updates (i.e., with only sleep-phase updates). It has also been termed ``forward amortized variational inference''~\citep{ambrogioni2019forward}.
In addition, it has been proposed for likelihood-free inference (a.k.a. simulation-based inference) by \cite{papamakarios2016fast}.

Fourth, we conduct two extensive case studies to assess the empirical performance of the proposed method (\cref{sec:experiments}).
The first focuses on the analysis of high-resolution images of both stars and galaxies, while the second focuses on the analysis of an image of a starfield with an extremely high density of objects. In both, the proposed method outperforms the existing methods. Through ablation studies, we probe the extent to which the spatially autoregressive aspect of the proposed method contributes to calibration and performance.

We conclude by discussing a surprising limitation of NPE, which we connect to the issue of ``exposure bias'' in autoregressive text generation (\cref{sec:discussion}). We also discuss several limitations of our tile-based variational family and the relevance of our work to astronomical practice. There are barriers to the widespread adoption of our approach in the astronomy community, including the need to develop methods that propagate object-level uncertainty information to population-level estimates  (\cref{sec:asto-discussion}). There is also great potential, both to improve the accuracy of astronomical catalogs and to put astronomical image analysis on a sounder statistical footing.

\section{Generative models of astronomical images} \label{sec:model}
Astronomical images record the number of photoelectrons detected at each pixel of a charge-coupled device (CCD) image sensor.
This count depends on multiple factors, including the number of astronomical objects in the image; these objects' positions, types, spectral energy densities, and shapes; the background intensity and gain; and the point spread function. Multiple objects often contribute photoelectrons to the same pixel.

In this section, for concreteness, we introduce a specific statistical model that is similar to the models proposed in previous work on probabilistic detection of astronomical objects \citep{portillo2017improved, feder2020multiband, liu2023variational, buchanan2023markov}.
However, our model is more explicit about the spatial extent (size) of imaged objects because explicit bounds on spatial extent are useful for reasoning about the posterior dependence structure.

\subsection{Prior} \label{sec:prior}
The prior on catalogs is a marked spatially homogeneous Poisson process. Objects appear in an image with rate $\mu$. For an $H\times{}W$-pixel image, the number of objects in this image is
\begin{align}
    \label{eq:n_prior}
    S &\sim \text{Poisson}(\mu HW).
\end{align} 
For $s=1,\ldots,S$, the position $u_s$ of object $s$ is distributed uniformly within the image:
\begin{align*}
  u_s \mid S & \sim \text{Uniform}([0, H] \times [0, W]).
\end{align*}
Each object is additionally marked by properties such as object type (e.g., star, galaxy), flux (brightness) in various frequency ranges, and, for galaxies, morphological characteristics.
We refer to these properties of object $s$ collectively as $v_s$, and
\begin{align*}
    v_s \mid S, u_s \sim F(u_s),
\end{align*}
where $F(u_s)$ is a distribution, optionally parameterized by position $u_s$, that is sampled independently for each imaged object.
Because our focus is object detection rather than object characterization, the specific form of $F$ is not important for what follows, provided that $F$ can be sampled efficiently.

\subsection{Conditional likelihood} \label{sec:likelihood}
Let $y = \{u_1, v_1,\ldots, u_S, v_S \}$ denote a latent catalog that is sampled from the prior.
Then, for $h=1,\ldots, H$ and $w=1,\ldots,W$, the photoelectron count for pixel $(h, w)$ is
\begin{align*}
    x_{h,w} \mid y \sim \mathrm{Poisson}(\lambda_{h,w}(y)),
\end{align*}
where
\begin{align*}
\lambda_{h,w}(y) \coloneqq \gamma_{h,w} + \sum_{s=1}^s \xi_{h,w}(u_s, v_s).
\end{align*}
Here, $\gamma_{h,w}$ is the expected background intensity for pixel $(h,w)$ and $\xi_{h,w}(\cdot)$ is a nonnegative function that maps an object's properties to its expected contribution to pixel $(h, w)$. In this work, for simplicity, both quantities are taken to be deterministic and known, with values supplied by the existing survey pipelines. If it were desirable to treat them as random, the method of \citet{patel2025neural}, which is complementary to this work, could be used to train an inference network with random backgrounds and random point spread functions.

We require $\xi_{h,w}(\cdot)$ to have the property that $\xi_{h,w}(u_s, v_s) = 0$ if $\|[h, w]^\top - u_s\|_\infty > R$, for some $R > 0$. We refer to $R$, which encodes the locality of light sources, as the maximum object radius measured in pixels.

An implication of the additive form of $\lambda_{h,w}(y)$ is that there is no occlusion among astronomical objects---an assumption that is well-founded for most applications.
Because the number of arrivals is large, it is common to use the normal approximation to the Poisson distribution:
\begin{align*}
  x_{h, w} \mid y \sim \mathrm{Normal}(\lambda_{h,w}(y),\, \lambda_{h,w}(y)).
\end{align*}

The method that follows applies equally well regardless of whether the number of arrivals is modeled as Poisson or Gaussian. We refer to the pixels of an image collectively as $x \coloneqq (x_{1,1}, \ldots, x_{H,W})$.


\section{A new representation of catalogs}
\label{sec:change-of-variables}

In the model described in the previous section, a latent astronomical catalog is represented as a random set $y$, which is a complex probabilistic structure: it is transdimensional, it does not provide variable names that allow for indexing objects in a spatially coherent way, and it does not distinguish between objects that are of interest (e.g., because they are bright enough to detect) and those that are not.
In this section, we propose a new representation of a catalog that allows us to more explicitly notate the spatial proximity of objects (\cref{sec:spatial}) and objects that are part of a cohort of interest (\cref{sec:of-interest}).

\subsection{Tiling}
\label{sec:spatial}
To facilitate reasoning about spatial proximity among objects, we perform a change-of-variable operation on the latent variables, given by the following bijection:
\begin{align}
\label{eq:tiling}
f_1(y) \coloneqq
\begin{bmatrix}
    y_{1,1} & \cdots & y_{1, W'} \\
    \vdots  & \ddots & \vdots    \\
    y_{H',1} & \cdots & y_{H', W'}
\end{bmatrix},
\end{align}
where $H' = H / T$, $W' = W / T$, and $T$ is a user-defined tile side length in pixels. In our experiments, $T = 2$ or $T = 4$.

The entries in this matrix are sets. For row $h$ and column $w$, partial catalog $y_{h,w}$ is the subset of astronomical objects in $y$ whose centers have pixel coordinates in $[(h - 1)T, hT) \times [(w - 1)T, wT)$.

We denote the set of all tile indices $(h, w)$ as $\Omega \coloneqq \{1,\ldots,H'\}\times\{1,\ldots,W'\}$.
For a subset of the tile indices $L \subset \Omega$, we denote the corresponding latent random variables $y_L \coloneqq \{y_i : i \in L\}$.
We refer to the elements of $y_\Omega$ as the tile latent variables.

Posterior independence between pairs of tile latent variables, even those corresponding to distant tiles, does not generally hold for this model: random variables are coupled through the chains of potential objects between them. However, conditional independence does hold for tile latent variables that are sufficiently far apart, given the other tile latent variables, as shown in Appendix A.
We subsequently develop a family of candidate posterior approximations with this property enforced by construction rather than having to learn through potentially costly and inexact numerical optimization.

\subsection{Objects of interest}
\label{sec:of-interest}

Often in astronomical object detection, only some objects are of interest. In any region of the sky, the vast majority of objects are so faint that they do not contribute even a single photoelectron to any particular image. Other objects contribute a small number of photoelectrons to images (e.g., tens); their contribution is great enough to matter in aggregate, but not so great that any of these objects can be identified and localized with sufficient precision to be useful.
Thus, it is common practice in astronomical analyses to stratify objects by flux and/or redshift and focus inquiries on a cohort of them.

A change of variables gives us notation to describe inference for the objects of interest, without ignoring the aggregate influence of the remaining objects. In particular, for each tile latent variable, we set
$f_2( y_\ell ) \coloneqq [z_\ell, e_\ell]^\top$,
where, for the objects centered in the tile $\ell$, the objects of interest are denoted $z_\ell$, and others, which we refer to as ``nuisance objects,'' are denoted $e_\ell \coloneqq y_\ell \setminus z_\ell$.
In our experiments, we restrict the objects of interest to those with flux in excess of a threshold. Additionally, if this set includes more than $M$ objects centered in a particular tile, we further restrict the objects of interest to the $M$ brightest objects in this tile, where $M$ is a user-defined integer.

Observe that $\{z_{1,1}, e_{1,1}, \ldots,z_{H',W'}, e_{H',W'}\}$ is a partition of $y$.
Collectively, we refer to objects of interest as $z \coloneqq \{z_{1,1}, \ldots,z_{H',W'}\}$ and nuisance objects as $e \coloneqq \{e_{1,1}, e_{1,1}, \ldots,e_{H',W'}\}$.
We aim to infer $p(z \mid x)$,
which is the marginal of the posterior distribution that describes the objects of interest.
This quantity is potentially easier to infer than the full posterior, as the number of objects is bounded under it.
Restricting our attention to this marginal of the posterior does not change the generative model: nuisance objects are nevertheless modeled as contributing photoelectrons to the observed image.


\begin{table}[p]
\centering
\small
\caption{Summary of mathematical notation used throughout the manuscript, ordered by first appearance.
}
\label{tab:notation}
\begin{tabular}{ll|ll}
\toprule
\textbf{Symbol} & \textbf{Description} & \textbf{Symbol} & \textbf{Description} \\
\midrule
\multicolumn{4}{c}{\textit{\textbf{Section 2: Generative model}}} \\
\midrule
$\mu$ & Density of astronomical objects & $u_s$ & Position of object $s$ \\
$H$ & Image height in pixels & $v_s$ & Properties of object $s$ \\
$W$ & Image width in pixels & $F$ & Distribution of object properties \\
$S$ & Number of objects in image & $y$ & Latent catalog (all objects) \\
$s$ & Object index & $h, w$ & Pixel row/column indices \\
$x_{h,w}$ & Photoelectrons observed at $(h,w)$ & $\gamma_{h,w}$ & Background intensity \\
$\lambda_{h,w}(y)$ & Expected intensity at $(h,w)$ & $\xi_{h,w}(\cdot)$ & Object's contribution to pixel \\
$R$ & Maximum object radius & $x$ & Full astronomical image \\
\midrule
\multicolumn{4}{c}{\textit{\textbf{Section 3: A new representation of catalogs}}} \\
\midrule
$f_1$ & Tiling transformation & $\Omega$ & Set of all tile indices \\
$T$ & Tile side length & $L$ & Subset of tile indices \\
$H'$ & Number of tile rows ($H/T$) & $y_L$ & Tile variables for tiles in $L$ \\
$W'$ & Number of tile columns ($W/T$) & $y_\Omega$ & All tile variables \\
$y_{h,w}$ & Objects in tile $(h,w)$ & $f_2$ & Objects of interest transform \\
$\ell$ & Tile index, $\ell \coloneq (h, w)$ & $e$ & All nuisance objects \\
$y_\ell$ & Objects in tile $\ell$ & $M$ & Max objects of interest per tile \\
$z_\ell$ & Objects of interest in tile $\ell$ & $z$ & All objects of interest \\
$e_\ell$ & Nuisance objects in tile $\ell$ & & \\
\midrule
\multicolumn{4}{c}{\textit{\textbf{Section 4: A spatially autoregressive variational family}}} \\
\midrule
$p(z \mid x)$ & Posterior (target) distribution & $z_\ell^{[i]}$ & $i$-th object in tile $\ell$ ($\ell \in \mathbb N^2)$ \\
$d$ & Divergence measure & $z_k^{[i]}$ & $i$-th objects in rank-$k$ tiles ($k \in \mathbb N^1)$ \\
$Q$ & Variational family & $z_k^{[<i]}$ & Pre-$i$ objects in rank-$k$ tiles \\
$q$ & Variational distribution & $f$ & Inference network \\
$q^\star$ & Optimal variational dist. & $f_\mathcal{X}$ & Image backbone network \\
$K$ & Number of checkerboard ranks & $f_\mathcal{N}$ & Neighborhood network \\
$C_k$ & Set of rank-$k$ tiles & $f_\mathcal{D}$ & Detection head network \\
$\Psi(h,w)$ & Rank function for tile & $r_{\mathcal{X}}$ & Image receptive field radius \\
$z_{<k}$ & Pre-rank-$k$ tile variables & $r_{\mathcal{N}}$ & Neighborhood receptive radius \\
$z_k$ & Rank-$k$ tile variables & $D_1$ & Neighborhood channels \\
$\sigma$ & Permutation of objects & $D_2$ & Embedding dimension \\
\midrule
\multicolumn{4}{c}{\textit{\textbf{Section 5: Fitting the variational distribution}}} \\
\midrule
$\phi$ & Variational parameters & $q_\phi$ & Parameterized var. dist. \\
$\mathcal{L}_{\text{rev}}(\phi)$ & Reverse KL objective & $\mathcal{L}_{\text{fwd}}(\phi)$ & Forward KL objective \\
$c$ & Constant w.r.t. $\phi$ & & \\
\bottomrule
\end{tabular}
\end{table}

\section{A variational family with spatially autoregressive tiling} \label{sec:variational_family}
Given an astronomical image $x$, we wish to infer the posterior distribution $p(z \mid x)$ of catalogs $z$.
Since images often contain hundreds or thousands of astronomical objects, each with continuous parameters (e.g., position, flux), the posterior is defined over a space that is both high-dimensional and transdimensional~\citep{brewer2013probabilistic,portillo2017improved}.
Exact sampling from the posterior and computing expectations with respect to it are intractable in this setting. We therefore turn to variational inference (VI) to approximate the posterior with a tractable distribution.

VI recasts posterior inference as numerical optimization, which facilitates computationally efficient solutions \citep{blei2017variational, zhang2018advances}.
Given observations $x$, VI selects an approximation to the posterior $p(z \mid x)$ from a tractable family of candidate distributions $Q$, known as a variational family. This is achieved by finding a distribution $q^\star \in Q$ that minimizes a divergence $d$:
\begin{align}
    \label{eq:vi}
    q^\star = \argmin_{q \in Q} d(p(z \mid x), q(z)).
\end{align}
The remainder of \cref{sec:variational_family} proposes a particular variational family $Q$ for use in \cref{eq:vi}. We present the checkerboard-based factorization of the variational distribution (\cref{sec:vf_form}), its model of objects within the same tile (\cref{multiple-objects}), and the architecture of a neural network that enables amortization (\cref{sec:amortized}). The proposed variational family is autoregressive in that tiles have a sequential dependence structure and dependencies are spatially local. The proposed variational family is not merely more flexible than that of its predecessor (StarNet):
its dependence structure precisely reflects that of the posterior (\cref{sec:mirrors}).

\Cref{tab:notation} provides a summary of the mathematical notation.

\subsection{A checkerboard-based factorization}
\label{sec:vf_form}

To define a variational family $q$, we first define a partial ordering over tiles. In $q$, the latent variables associated with each tile depend only on the latent variables of tiles that  come strictly before it in this ordering.

\begin{figure}
    \centering
    \includegraphics[width=0.3\textwidth]{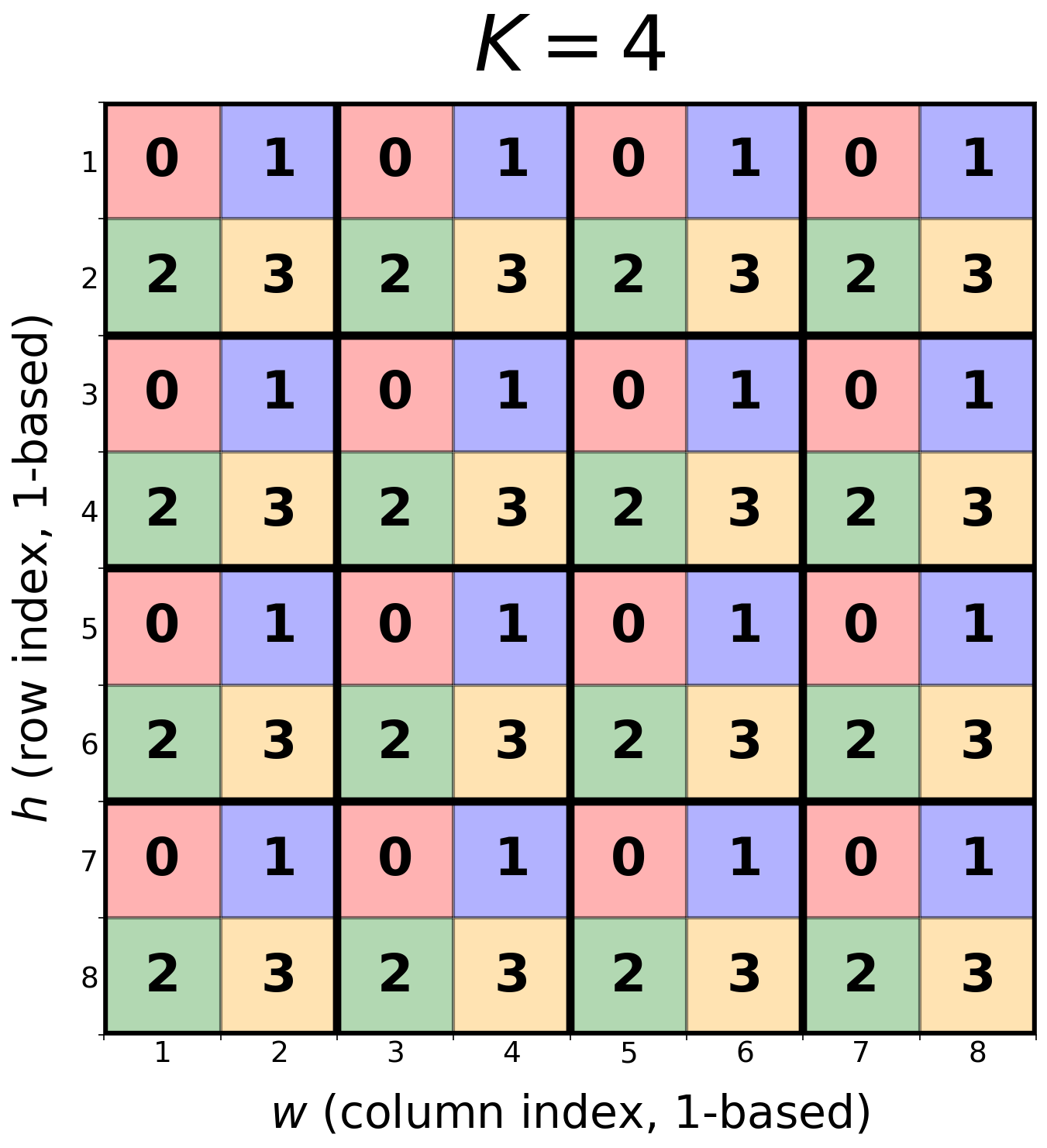}
    \hspace{1em}
    \includegraphics[width=0.3\textwidth]{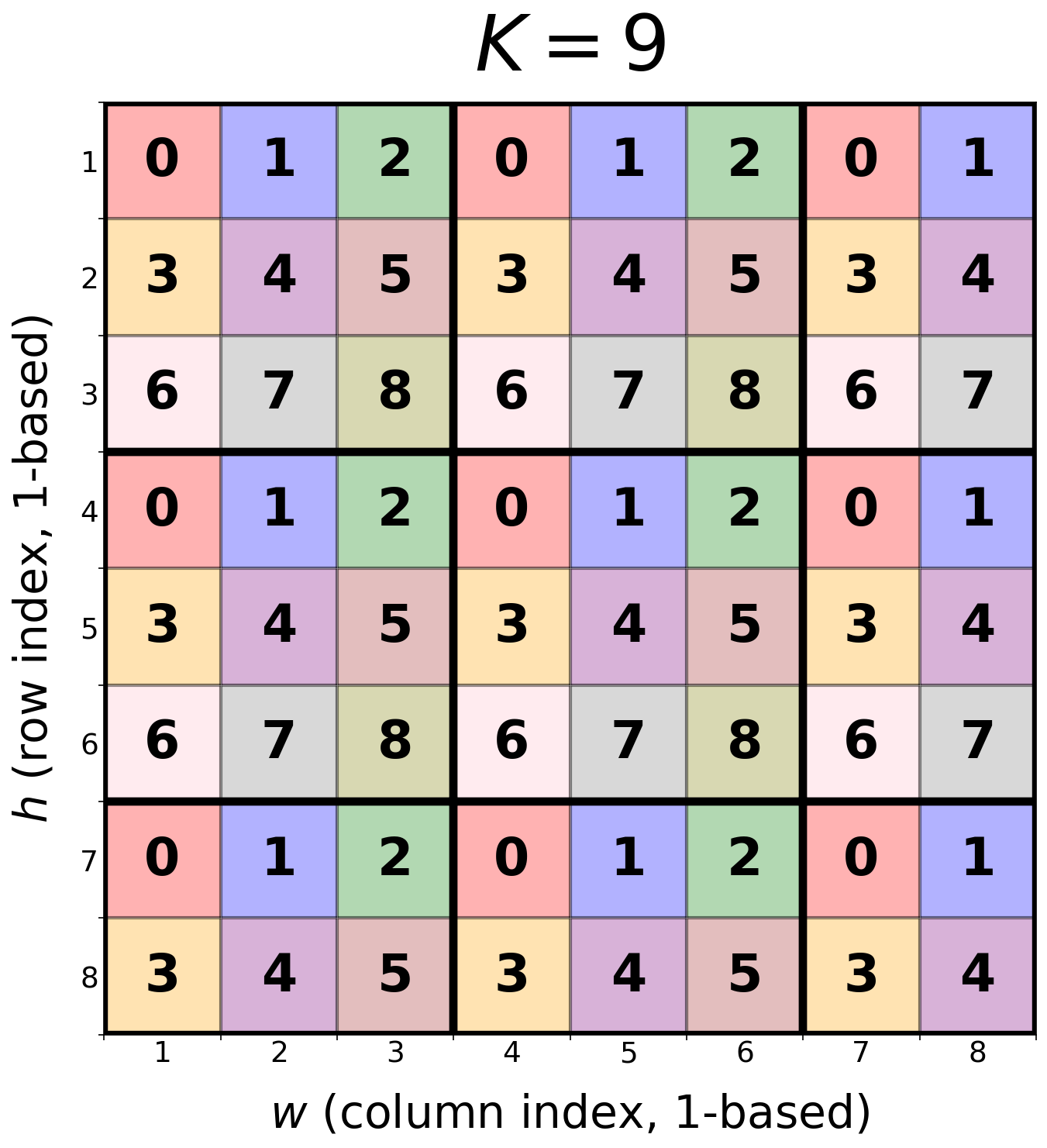}
    \hspace{1em}
    \includegraphics[width=0.3\textwidth]{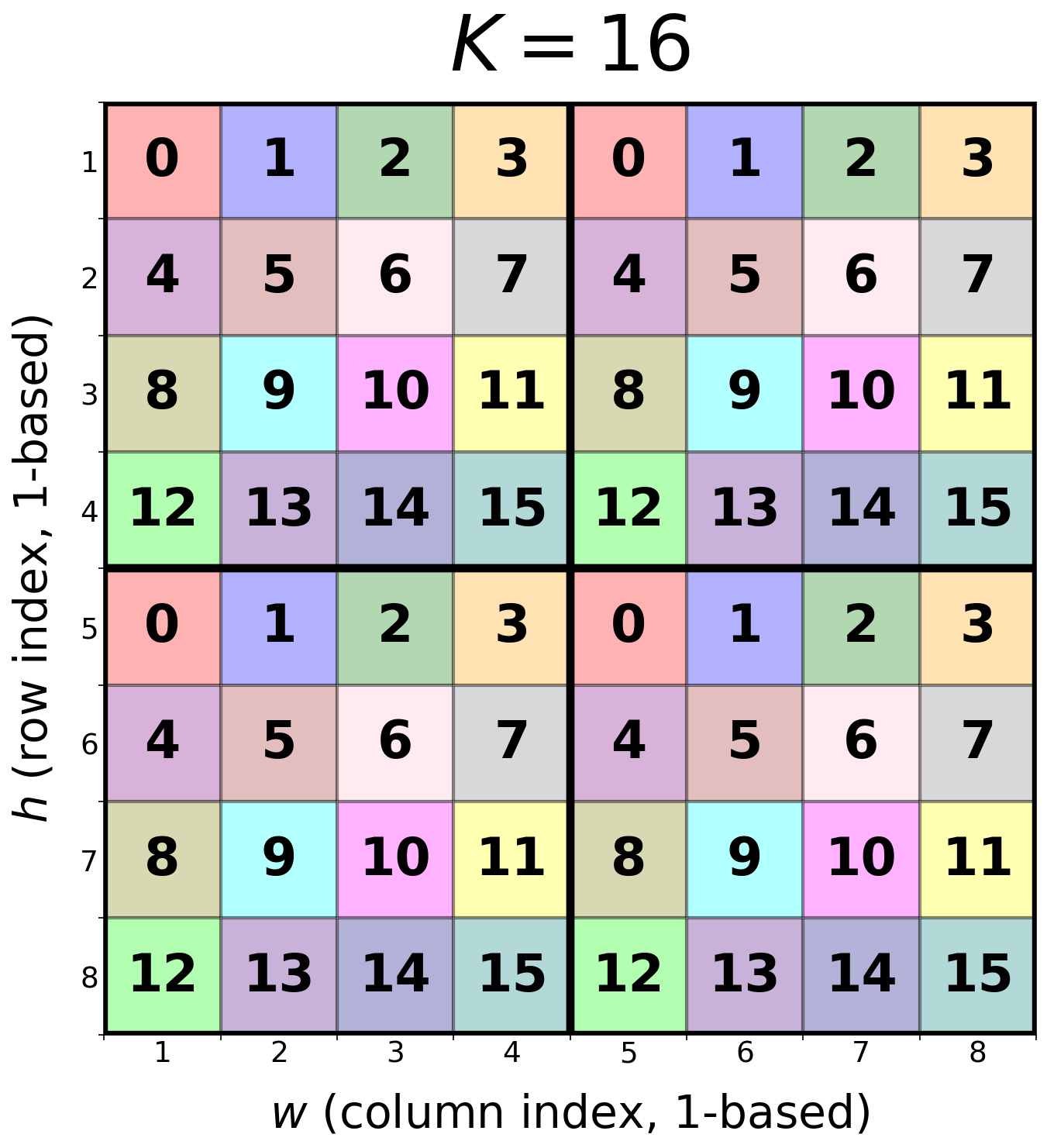}
    \caption{$K$-color checkerboard patterns for $K=$ 4, 9, and 16 colors, with color-specific ranks assigned by $\Psi(h, w)$. Tiles of the same color are separated by 1, 2, and 3 tiles, respectively, in these checkerboards, and $\sqrt{K} - 1$ in general.}
    \label{fig:checkerboards}
\end{figure}

To define a partial ordering of tiles, let $K$ be an integer that is a perfect square. 
For $h=1,\ldots,H'$ and $w=1,\ldots,W'$, let 
\begin{align*}
    \Psi(h, w) \coloneqq ((h - 1) \bmod \sqrt{K}) \cdot \sqrt{K} + ((w - 1) \bmod \sqrt{K}).
\end{align*}
We refer to $\Psi(h, w)$ as the rank of the tile with index $(h, w)$.
For $k = 1,\ldots,K$, let the rank-$k$ tiles be
\begin{align*}
C_k \coloneqq \{(h,w) \in \Omega : \Psi(h, w) = k\}.
\end{align*}
Then $\{C_1, \ldots, C_K\}$ is a partition of $\Omega$ that follows a $K$-color ``checkerboard'' pattern. \cref{fig:checkerboards} shows several checkerboard patterns of this form.

For any tile index $\ell = (h, w)$, let $z_\ell \coloneqq z_{h,w}$.
For $k=1,\ldots,K$, let the pre-rank-$k$ tile variables $z_{<k} \coloneqq \left\{z_\ell : \Psi(\ell) < k \right\}$ and let the rank-$k$ tile variables $z_{k} \coloneqq \left\{z_\ell : \Psi(\ell) = k \right\}$.
Note the overloading of notation here: $z_\ell$, with $\ell$ a length-two vector, refers to the latent variables associated with one tile whereas $z_k$, with $k$ a natural number, refers to the latent variables associated with a collection of tiles.

We approximate $p(z \mid x)$ with a variational distribution that factorizes over ranks:
\begin{align}
    q(z \mid x) \coloneqq \prod_{k = 1}^K q(z_k \mid z_{<k}, x). \label{eq:q_ranks}
\end{align}

\subsection{The within-rank variational distribution}
\label{multiple-objects}
Recall that each tile contains at most $M$ objects of interest.
Let $\sigma$ be a permutation of $(1, \ldots, M)$, which induces an ordering on the set of objects of interest.
In the context of a particular ordering $\sigma$, for a tile with coordinates $\ell \in \Omega$, denote the elements of $z_\ell$ as $z_\ell^{[1]}, \ldots, z_\ell^{[M]}$, with the convention that if $|z_\ell| < M$, it is padded with a (non-random) null value. (The dependence on $\sigma$ is implicit.) 
For a particular rank $k$, let the $i$th objects $z_k^{[i]} \coloneqq \left\{ z_\ell^{[i]} : \ell \in C_k \right\}$
and let the pre-$i$ objects $z_k^{[<i]} \coloneqq \left\{ z_k^{[1]}, \ldots, z_k^{[i - 1]} \right\}$.

In our variational distribution, the ordering $\sigma$ is an auxiliary random variable; it is not part of the generative model. We define $q$ over an extended space of latent variables that includes both the latent variables of interest $z_k$ and the permutation $\sigma$. This notational convention of reusing $q$ to denote the distribution over auxiliary variables follows convention in the variational inference literature \citep[e.g.,][]{maaloe2016auxiliary}.

As $\sigma$ is an auxiliary variable, the posterior approximation under $q$ is obtained by marginalizing $\sigma$. For a generic functional $f$, we denote this marginalization as
\begin{align}
    \mathbb E_{\sigma \sim q(\sigma)} f(\sigma) \coloneqq \sum_\sigma f(\sigma) q(\sigma).
\end{align}
Defining the posterior approximation as the result of this marginalization ensures that the approximation is a well-defined probability distribution that integrates to one, whereas this property would not be as apparent if it were instead defined directly as a weighted sum.

Under $q$, the permutation $\sigma$ marginally follows a uniform distribution over the $M!$ possible permutations. This represents a modeling choice made for simplicity. Alternatively, for example, when converting a set of astronomical objects to a sequence, we could have upweighted mappings that sort objects by flux. Such domain-specific ordering heuristics might offer computational advantages, such as more stable training or faster convergence. However, we choose the uniform distribution over permutations because it requires no additional hyperparameters or domain knowledge about optimal orderings.

We return now to \cref{eq:q_ranks}, to expand the right-hand side.
For $k=1,\ldots,K$, we set
\begin{align}
    \label{eq:rank-k-tiles-marginal-order}
    q(z_k \mid z_{<k}, x) \coloneqq \mathbb E_{\sigma \sim q(\sigma)} \left[ q(z_k \mid \sigma, z_{<k}, x) \right],
\end{align}
which can be computed exactly by summing over the $M!$ permutations. This sum can be calculated quickly if $M$ is small (e.g., less than 5), which is the setting we have in mind. \cref{sec:gp-discussion} discusses approximations for larger $M$.

We additionally set
\begin{align}
    \label{eq:rank-k-tiles-given-order}
    q(z_k \mid \sigma, z_{<k}, x) \coloneqq \prod_{i = 1}^M q(z_k^{[i]} \mid z_k^{[<i]}, z_{<k}, x).
\end{align}
Here, the right-hand side depends implicitly on $\sigma$ to order the objects.
Finally, we specify that
\begin{align}
q(z_k^{[i]} \mid z_k^{[<i]}, z_{<k}, x)
\coloneqq q\left(z_k^{[i]} \mid f\left(z_k^{[<i]}, z_{<k}, x\right) \right),
\end{align}
where $f$ is a neural network, whose structure we explain next.

\subsection{The inference network}
\label{sec:amortized}
So far we have described partial catalogs (e.g., $z_{<k}$) in mathematical terms as collections of sets organized according to an $H' \times W'$ grid with some cells excluded. To implement our variational distribution using convolutional neural networks (CNNs), we must address a practical constraint: CNNs operate on fixed-size tensors, not variable subsets of data. This constraint is fundamental to leveraging the parallel processing capabilities of GPUs.

We handle this using binary masking, a common technique in neural network design \citep[e.g.,][]{devlin2018bert}. For any partial catalog, we represent it as a complete $H' \times W'$ tensor where missing entries are set to zero, accompanied by a binary mask tensor of the same dimensions. The binary mask indicates which entries are present. When a masked catalog serves as network input, we concatenate both the masked data tensor and the mask itself, allowing the network to distinguish between entries that are truly zero-valued and those that are absent. This representation maintains the fixed tensor dimensions required for CNN operations.

In this section, when partial catalogs (e.g., $z_{<k}$) appear as inputs or outputs of our neural networks, they should be understood as using this masked representation. The neural network $f$ has the following form:
\begin{align*}
f\left(z_k^{[<i]}, z_{<k}, x \right)
\coloneqq f_\mathcal{D}\left( f_\mathcal{X}(x),\, f_\mathcal{N}\left(z_{<k}, z_k^{[<i]}\right) \right).
\end{align*}
The network $f$, represented graphically in \cref{fig:architecture}, consists of three components: the image backbone $f_\mathcal{X}$, the neighborhood network $f_\mathcal{N}$, and the detection head $f_\mathcal{D}$. We give more detail about each of these three component networks below, and in Appendix~D we specify the precise network architectures.

Note that by parameterizing $f$ as a neural network, rather than using an iterative optimization procedure as in traditional VI, we achieve amortization: the same network can be applied to arbitrary images and partial catalogs without retraining. Additionally, our network architecture, which has a limited receptive field with respect to both the image and the partial catalogs, provides locality in the dependence structure of our variational family.

\begin{figure}
    \centering
    \includegraphics[width=0.55\linewidth]{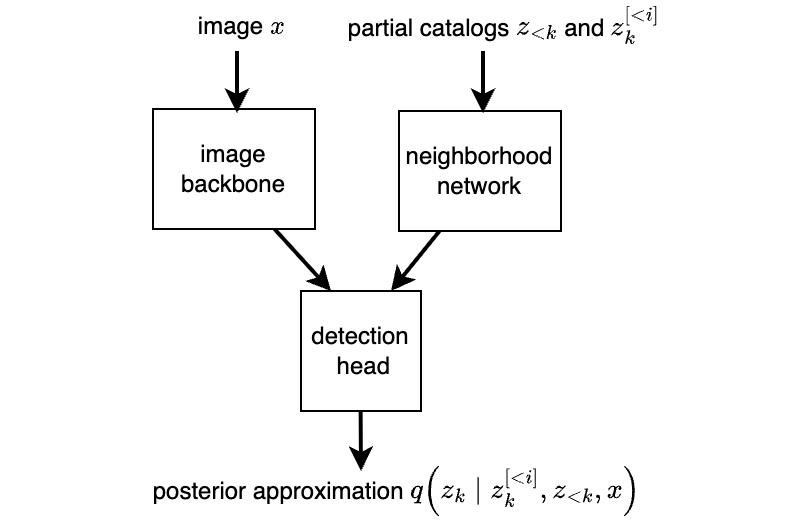}
    \caption{The inference network. Sampling from $q(z \mid x)$ or evaluating a likelihood under it involves a single forward pass of the image backbone, whereas the neighborhood network and detection head must be called iteratively, for each tile rank $k \in \{1,\ldots K\}$ and each potential object $i \in \{1,\ldots,M\}$. 
    }
    \label{fig:architecture}
\end{figure}


\paragraph*{The image backbone}
The image backbone, $f_\mathcal{X}$, takes the image as input and outputs features that are useful for downstream inference.
It contains 4.9 million parameters, the majority of the parameters of the inference network. The backbone architecture resembles that of a U-net, which is a CNN composed of a sequence of downsampling layers followed by upsampling layers, with down- and up-blocks of the same spatial resolution linked through skip connections \citep{ronneberger2015u}.

We additionally adopt aspects of the architecture of YOLOv5, which is a real-time, single-stage object detection model, known for innovations such as cross-stage partial blocks \citep{redmon2016you}.
In contrast to YOLOv5, we use group normalization instead of batch normalization because we found the latter to work poorly in the context of a recurrent architecture. Additionally, replacing batch norm with group norm allows us to train with small batches without instability, thus conserving GPU memory.

In the image backbone, the degree of downsampling typically exceeds the degree of upsampling, resulting in a spatially coarser output than the image input. The degree of coarsening must accord with the size of the tile $T$. 
For example, if the tile size is 4$\times$4 pixels, our network must downsample two times more than it upsamples, with a stride of two. If we think of tile size $T$ as being set first, before the network architecture is fixed, then the choice of networks is restricted to architectures that produce $H/T \times{} W/T$ pixel outputs. Alternatively, we may take the perspective that the choice of network architecture determines tile size $T$.

In addition to determining (or respecting) the tile size, the image backbone architecture, especially its depth, kernel sizes, and strides, determines the receptive field for each tile; that is, which pixels in the image can influence the distributional parameters for each tile's latent variables.
We use the nonnegative integer $r_{\mathcal X}$ to characterize the ``radius'' of the receptive field of the image backbone.
In particular, for tile $\ell \in \Omega$, the output of $f_{\mathcal X}$ in spatial position $\ell$ only depends on pixels within $r_{\mathcal N}$ pixels of pixel $x_{(\ell - 1)T+1}$, in terms of the $L_\infty$ norm.

\paragraph*{The neighborhood network}
As input, the neighborhood network $f_{\mathcal N}$ takes 1) the properties of objects in tiles with rank $k$ that are before $i$ in the ordering induced by $\sigma$, and 2) the properties of objects in tiles with pre-$k$ ranks. 

The network produces output in two steps. In the first step, regardless of $i$ or $k$, it constructs a representation of these objects with dimension $D_1 \times H' \times W'$, where $D_1$ is a user-defined number of channels.
This representation is constructed without trainable parameters, by concatenating objects' properties in a way that preserves their spatial positions, zero-padding as necessary. 

In the second step, this initial representation is transformed by a fully convolutional network to a tensor of dimension $D_2 \times H' \times W'$, where $D_2$ is a user-defined embedding dimension.
This network is lightweight relative to the image backbone; it is composed of just a few convolutional blocks.

We use the nonnegative integer $r_{\mathcal N}$ to characterize the ``radius'' of the receptive field of the neighborhood network, using the $L_\infty$ norm. In particular, for $\ell \in \Omega$, the output of $f_{\mathcal N}$ in spatial position $\ell$ only depends on $z_{\ell'}$ if $\|\ell - \ell'\|_\infty \le r_{\mathcal N}$.
For example, if the neighborhood network contains only 1$\times$1 kernels except for a single 3$\times$3 kernel (and stride is always 1), then $r_{\mathcal N}=1$.


\paragraph*{The detection head}
The detection head takes input from both the image backbone and the neighborhood network. Using five convolutional blocks exclusively with 1$\times$1 convolutional kernels, it outputs the distributional parameters for each object inferred. The precise number of these parameters depends on the variational family; it is on the order of tens in our case studies.

\subsection{Stuctural equivalence between the posterior and the variational distribution}
\label{sec:mirrors}
In the generative model, the latent variables for different tiles are sampled independently, and pixel intensities depend solely on nearby tile latent variables (\cref{sec:model}).
In contrast, the variational distribution is defined by an autoregressive process: for each $k$, the rank-$k$ tile latent variables depend only on pre-rank-$k$ tile latent variables. Further, the tile latent variables for a particular tile $\ell$ depend only on the tile latent variables of tiles near $\ell$ (\cref{sec:vf_form,sec:amortized}).

Both the generative model and the proposed family of variational distributions make conditional independence assertions through their factorizations.
Ideally, these assertions are the same \citep{webb2018faithful}.
Independence assertions made by the posterior but not the variational family are problematic because costly and inexact numerical approximation would be required to infer what is known a priori.
Independence assertions made by the variational family but not the posterior limit are potentially even more problematic: no amount of optimization would allow the recovery of these dependencies.

The proposed variational family can be configured to match the independence assertions of the posterior distribution, in the sense of being a minimal I-map for the posterior distribution, as we show in Appendix~B.
This configuration involves a particular setting of the number of ranks $K$, the image receptive field radius $r_{\mathcal X}$, and the catalog receptive field radius $r_{\mathcal N}$, which all involve the maximum object radius $R$. To exactly match the structure of the posterior, $K$ needs to grow quadratically in the ratio between object radius $R$ and tile size $T$. Setting $K$ to a smaller value would sever some dependencies present in the exact posterior distribution, but may nevertheless be preferable if doing so reduces the computational burden.



\section{Fitting the variational distribution}
\label{sec:fitting}

Let $\phi$ denote the parameters of the variational distribution.
To make explicit the dependence of $q$ on $\phi$, set $q_\phi \coloneqq q$.
Traditionally, in variational inference, the variational distribution $q_\phi$ is fitted to minimize the following Kullback-Leibler (KL) divergence:
\begin{align*}
    \mathcal{L}_{\text{rev}}(\phi) \coloneqq D_\mathrm{KL}(q_\phi(z \mid x), p(z \mid x)).
\end{align*}
KL divergence is not symmetric. This one, which is from the approximation to the target, is known as the reverse (or exclusive) KL divergence. In contrast, the KL divergence from the target to the approximation is known as the forward (or inclusive) KL divergence \citep{zhang2023transport}.
The reverse KL divergence is equivalent to the negative ELBO up to a constant with respect to $\phi$~\citep{blei2017variational}. Thus, maximizing the ELBO to fit $q_\phi$ also minimizes the reverse KL divergence.

Instead of using this traditional variational objective, we propose to fit $\phi$ to minimize an expectation of the forward KL divergence:
\begin{align*}
    \mathcal L_\mathrm{fwd}(\phi) \coloneqq \mathbb{E}_{x \sim p(x)} D_{\mathrm{KL}}(p(z \mid x) \mid\mid q_\phi(z \mid x))
\end{align*}
This expectation is taken with respect to samples from the model; unlike in ELBO-based VI, it is not an empirical average of observed data points.
In a sense, fitting this objective function requires us to ``solve the inverse problem'' for all possible synthetic data sets, rather than just the observed data.

Remarkably, 
the gradient of this objective can be efficiently estimated without bias, which allows us to minimize the objective using stochastic first-order optimizers such as stochastic gradient descent (SGD). 
Observe that due to Bayes' rule,
\begin{align}
\mathcal L_\mathrm{fwd}(\phi)
&= -\mathbb{E}_{x \sim p(x)} \mathbb{E}_{z \sim p(z \mid x)} \left[\log q_\phi(z \mid x) \right] + c\\
&= -\mathbb{E}_{z \sim p(z)} \mathbb{E}_{x \sim p(x \mid z)} \left[\log q_\phi(z \mid x) \right] + c,
\label{eq:favi_ancestral}
\end{align}
where $c$ is constant with respect to $\phi$.
Thus, stochastic approximations of $\mathcal L_\mathrm{fwd}(\phi)$ can be formed through ancestral sampling: sample $z$ from the prior and sample $x$ conditional on $z$.
Computing this approximation using automatic differentiation gives an unbiased approximation of the gradient:
\begin{align}
\nabla \mathcal L_\mathrm{fwd}(\phi)
&= -\mathbb{E}_{z \sim p(z)} \mathbb{E}_{x \sim p(x \mid z)} \left[\nabla \, \log q_\phi(z \mid x) \right].
\label{eq:grad_favi}
\end{align}

Once $\phi$ has been fitted,  the posterior is approximated by $q_{\phi^\star}(z \mid x_0)$, where $x_0$ is the observed data and $\phi^\star$ is the optimizer of \cref{eq:grad_favi}. This approach to inference, now known as NPE, has evolved through a series of works \citep{barber2004im,bornschein2015reweighted,papamakarios2016fast,ambrogioni2019forward}.

\subsection{Comparison of ELBO-based VI and NPE}
NPE has advantages over traditional ELBO-based VI, including that 1) unbiased low-variance stochastic gradients of its objective can be easily computed, even with a mix of discrete and continuous random variables (the reparametrization trick is not needed); 2) inference does not require computing likelihoods; 3) latent variables can be marginalized over implicitly, simply by excluding them from the optimization problem; 4) if the variational family is not flexible enough to include the exact posterior, the posterior approximation recovered is generally overdispersed rather than underdispersed, which is typically preferable; and 5) with plausible assumptions, optimization in NPE is globally convergent regardless of the contours of the log likelihood function. Appendix~C provides further details about these advantages, as well as several drawbacks of NPE.

\subsection{Efficient NPE for fitting the proposed variational distribution}
\label{sec:our-npe}

The NPE objective (\cref{eq:favi_ancestral}) can be further expanded by substituting in our particular variational distribution (\cref{eq:q_ranks}), yielding
\begin{align}
\mathcal L_\mathrm{fwd}(\phi)
&= -\mathbb{E}_{z \sim p(z)} \mathbb{E}_{x \sim p(x \mid z)} \left[\sum_{k = 1}^K \log q_{\phi}(z_k \mid z_{<k}, x) \right] + c.
\label{eq:tiles}
\end{align}

Minimizing this objective can be viewed as learning to solve a class of supervised infilling problems: given some entries of a catalog (specifically, the pre-rank-$k$ tile latent variables $z_{<k}$), predict other catalog entries (the rank-$k$ tile latent variables $z_k$). The image $x$ is also provided context, but it is the catalog---not the image---that we are infilling. 

Each optimization step proceeds as follows: (1) load a simulated image $x$ and its corresponding catalog $z$ to the GPU; (2) create a masked catalog by zeroing out all entries of $z$ except $z_{<k}$, which serve as predictors for $z_k$; (3) pass the image and the masked catalog through the inference network, which outputs predictions for all catalog entries; (4) compute the loss (negative log likelihood) for the predicted values, but apply masking to retain loss only for the target entries $z_k$; and (5) update the network weights using gradients computed via automatic differentiation.

\subsection{Network hyperparameter selection}
Neural networks require optimizing hyperparameters across vastly higher-dimensional spaces than traditional statistical methods. These include choices such as the optimizer type and its configuration, normalization strategy, architectural parameters (for example, the number of units per layer, kernel dimensions, and padding), learning rate schedule, activation function, and structural modules like residual blocks or attention heads.

The standard approach to hyperparameter selection involves derivative-free optimization: evaluate multiple configurations and choose the one that yields the lowest validation loss. Since hyperparameters are not directly of scientific interest, it is sufficient to identify workable values rather than to find the global optimum; even limited exploration generally yields models that outperform traditional approaches.

\citet{karpathy2019recipe} recommends starting with a simple baseline model---perhaps copied from related literature---and then adding complexity one change at a time while validating each modification's impact on validation loss. This incremental approach ensures that each added architectural or tuning decision contributes demonstrably to performance.
\citet{bergstra2012random} demonstrate both empirically and theoretically that, when given the same computational budget, random search outperforms grid search, particularly if only some hyperparameters significantly influence validation performance.

\subsection{Setting the number of ranks}
In principle, the number of ranks $K$ could also be included in a hyperparameter search targeting validation loss. However, fixing $K = 4$ is adequate to ensure no two neighboring tiles (adjacent or diagonally adjacent) have the same rank (cf. Figure 1). As long as neighboring tiles have different ranks, they will not be cataloged independently, which is adequate to solve the problem that motivated this work: that objects on tile boundaries are often detected twice or missed if neighboring tiles are treated as independent in the variational distribution. Hence, we recommend fixing $K = 4$ for cataloging problems and follow this advice in our subsequent case studies. For other problems, which require modeling long-range posterior dependencies, it may be beneficial to explore larger $K$.

\subsection{Setting the tile size and the maximum number of objects per tile}
The choices of tile size $T$ and maximum sources per tile $M$ are interrelated. We recommend setting $T$ and $M$ so that the overwhelming majority of tiles are expected to contain no more than $M$ objects of interest, according to Poisson statistics. This can be achieved by making $T$ small, $M$ large, or some combination thereof. Any choice of $T$ and $M$ entails trade-offs. Large $M$ requires more sequential computation, as $M$ forward passes are needed, and the loss function requires $M!$ computations. These $M!$ calculations are individually inexpensive, and thus of little consequence for $M < 7$, but prohibitive for $M > 10$.
Also, larger $M$ results in a deeper hierarchy in the variational distribution, which could potentially exacerbate the issues of ``exposure bias,'' discussed in \cref{sec:npe-discussion}.

Small $T$, on the other hand, corresponds to more tiles for a fixed-size image. Tiles can be processed in parallel, so processing many tiles does not require more sequential compute, but it would require more GPU memory, and would likely become a bottleneck for $T < 1$ because the output size would exceed the input size.
A potential benefit of small $T$ is that there is less diversity of patterns in smaller tiles, which could lead to faster training. However, shrinking the tiles likely requires the network to rely more on the neighborhood network for context.

Although there are many competing concerns in selecting $M$ and $T$, as with setting neural network hyperparameters more generally, we should not be paralyzed by the lack of definitive guidance on finding the best single value: many settings can work well and the values that have already been demonstrated to work well are a good place to start.
The various metrics introduced in the case studies can give us quantitative guidance on our choices of $M$ and $T$.

\section{Case studies} \label{sec:experiments}


We conduct two case studies, each culminating in the analysis of real astronomical images from the Sloan Digital Sky Survey (SDSS), and each with a comparison of the proposed method to existing methods.
In the first case study, we catalog a high-resolution SDSS image that contains a typical density of detectable stars and galaxies ($\sim$5 objects per square arcminute).
In the second case study, we catalog an SDSS image of a crowded starfield known as the Messier 2 (M2) globular cluster, with a much higher density of detectable objects: $\sim$2500 objects per square arcminute.
For comparison, the Rubin Observatory LSST pipeline will detect $\sim$500 objects per square arcminute \citep{RubinObservatoryKeyNumbers}.


In addition to comparing to published methods, we evaluate two variants of the proposed method.
The first variant has the number of ranks $K$ set to one, which implies that the latent variables corresponding to different tiles are independent in the variational distribution. This setting of $K$ provides a baseline without spatially autoregressive structure. 
The second variant has the number of ranks $K$ set to four, so that no adjacent tiles have the same rank.
We do not consider $K>4$ because four ranks should be adequate to resolve the concern that motivated this work: that it can be ambiguous which neighboring tile contains an object's center.

In both case studies, the proposed method is implemented within the software framework of the Bayesian Light Source Separator (BLISS), and we refer to the proposed method as BLISS throughout this section. The BLISS codebase began in 2020 as a fork of StarNet~\citep{liu2023variational}. Since then, it has been adapted to catalog galaxies as well as stars~\citep{hansen2022scalable}, to detect and characterize gravitational lenses~\citep{patel2022scalable}, and to support spatially varying point PSFs and backgrounds~\citep{patel2025neural}. All of these previous works have relied on independent tiling.

\subsection{Case study 1: A typical SDSS image}

Our first case study focuses on cataloging a typical SDSS image, which contains roughly one object per 4000 pixels that is brighter than 22.5 magnitude, which is sometimes taken to be the SDSS detection threshold.
Note that while both ``magnitude'' and ``flux'' quantify the brightness of objects, magnitude is an inverse logarithmic transformation of flux; brighter objects have lower magnitudes.

\subsubsection{Data generation}

The first step in applying BLISS to SDSS images is to generate synthetic SDSS-like data (that is, catalogs and images) through ancestral sampling for use in training the BLISS inference network.
Our simulated data contains 52\% galaxies and 48\% stars, which is in line with existing SDSS catalogs.
Our prior is hierarchical, and the object flux distributions differ for stars and galaxies.
Both follow truncated Pareto distributions with parameters fitted to existing catalogs.
Objects are distributed uniformly in images.
Galaxies, which have spatial extent, follow a six-parameter bulge-and-disk model, as in \cite{regier2019approximate}.

We draw 131,072 object catalogs from this prior, and render a 256$\times$256-pixel image for each using the GalSim image simulator~\citep{rowe2015galsim}, which implicitly defines the conditional likelihood of our generative model. 
In addition to taking a catalog as an argument, GalSim requires us to specify 1) a point spread function (PSF), 2) calibration constants, which map counts of photoelectrons that a camera records to physical units, and 3) background intensities.
We set these three arguments to GalSim based on the estimates of them from the SDSS photometric pipeline.


\subsubsection{Fitting the variational distribution}
We configure the BLISS inference network to detect at most $M=1$ object per 4$\times$4-pixel tile, which is adequate to recover 99.6\% of objects according to Poisson statistics.
The variational distribution for each tile is governed by 80 distributional parameters, which are outputted by the inference network. These distributional parameters govern the object type (Bernoulli, 1 parameter), the flux (Gaussian, 2 parameters), the galaxy shape (composite, 12 parameters), and the within-tile position (a compound categorical-uniform distribution, 64 parameters).
We use the Adam optimizer with a constant learning rate.  
Training takes 5 hours on an NVIDIA 2080 Ti GPU with a batch size of six.
We use data augmentation, rotating by multiples of 90 degrees and flipping each image and the corresponding catalog.

\subsubsection{Probing the response curve}
\label{sec:toy-data}

Ideally, an inference network would detect objects equally well in the interior of an image regardless of their positions with respect to tile boundaries.
To assess how nearly various inference networks achieve this ideal, we apply them to simulated images of a star at various positions that are all near a tile boundary (\cref{fig:moving_star}).

\begin{figure}[t]
    \centering
    \includegraphics[width=.25\textwidth]{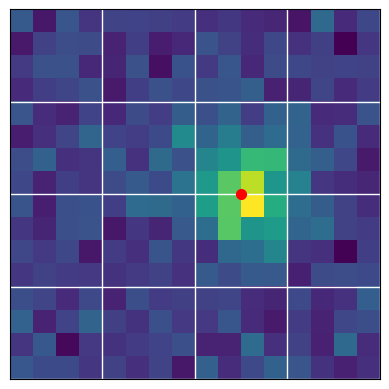}
    \includegraphics[width=.25\textwidth]{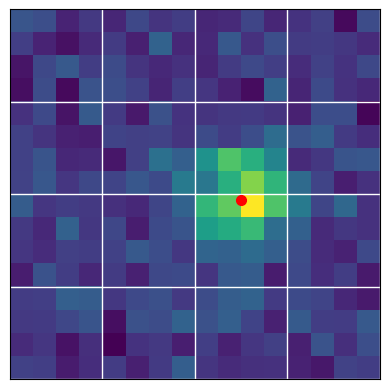}
    \includegraphics[width=.25\textwidth]{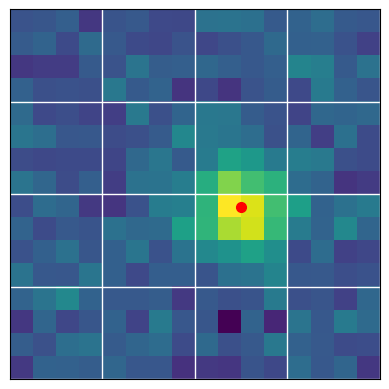}
    \caption{Three simulated images of a star. The tiles, which are $4\times{}4$ pixels in size, are demarked by thin white lines. The red dot in each image indicates the star's position within that image. In the left image, the star is imaged at a tile boundary. In the center image, the star is imaged 0.27 pixels below the tile boundary. In the right image, the star is imaged 0.54 pixels below the tile boundary.
    }
    \label{fig:moving_star}
\end{figure}

\paragraph*{One star directly on a tile border}
First, we apply BLISS with autoregressive tiling to images of a star precisely positioned on the border between two tiles (\cref{fig:border-experiment}).
The marginal detection probabilities show that a non-negligible probability is assigned to both of the tiles that border the object. These probabilities sum to one in this instance, which is not the case by construction, so this is encouraging: with enough training data, BLISS should learn this relation.
The remaining tiles all have a negligible probability of containing an object. 

Conditional on the event that the object is not contained in the lower of the two tiles it borders, the probability of the object being contained in the upper tile is 100\% (center-right). Counterfactually, had no objects been detected in the lower tile, the upper tiles would have a 4\% probability of containing an object (right). This is reasonable for our data as a bright neighboring star can effectively ``mask'' or ``hide'' a faint star in a neighboring tile.

\begin{figure}
    \centering
    \includegraphics[width=0.235\textwidth]{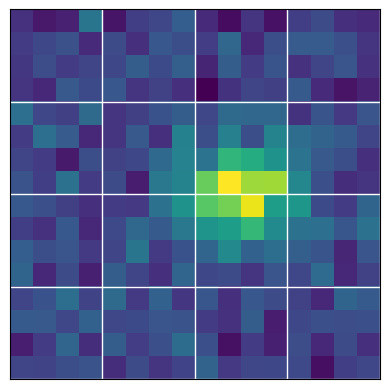}
    \includegraphics[width=0.24\textwidth]{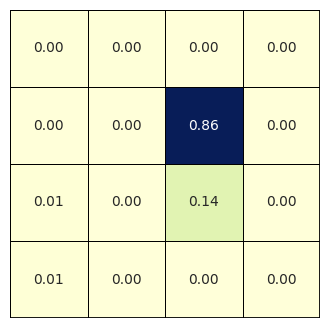}
    \includegraphics[width=0.24\textwidth]{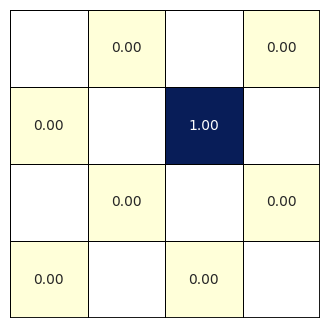}
    \includegraphics[width=0.24\textwidth]{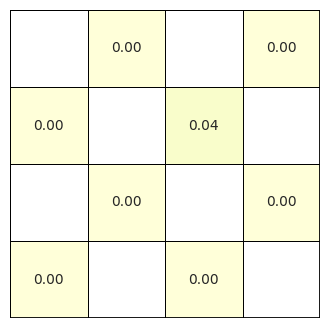}
    \caption{Analyzing a simulated image of a star centered on the border of two tiles. Left) the image. Center-left) the marginal probability of detection in each tile under the BLISS variational distribution. Center-right) the conditional probabilty of detection given no detections in the white tiles. Right) the conditional probability of detection given a single object was detected in tile (3, 3).}
    \label{fig:border-experiment}
\end{figure}

\paragraph*{One star ``moving'' across tile border boundaries}

Whereas the former experiment considered detection probabilities for a particular image,
we now average over 100 images for each catalog, each created by adding Poisson shot noise to a noise-free rendering of the catalog.
We then examine the frequency of detecting the correct number of objects.
We consider a variety of catalogs, each containing one star shifted vertically by a variable amount.
\cref{fig:correct_count_prob}, which is a key result of this article, shows the detection rates for both independent tiling and autoregressive tiling.

For a bright star (20.47 magnitude), with autoregressive tiling, the probability of detecting the correct number of objects (one) is 100\% regardless of position.
With independent tiling, accuracy decreases from 100\% in the center of a tile to 65\% at the border.
This decrease is highly undesirable because it is an artifact of tile boundaries, which were artificially introduced by our inference procedure.

For a faint star (22.21 magnitude), with autoregressive tiling, the frequency of detecting the correct number of stars (that is, one) was around 95\% regardless of position. In contrast, with independent tiling, the accuracy decreased from 95\% when the star was in the center of the tiles to 60\% at the borders of the tiles, showing a strong dependence on position.

\begin{figure}[t]
    \centering
    \includegraphics[width=.48\textwidth]{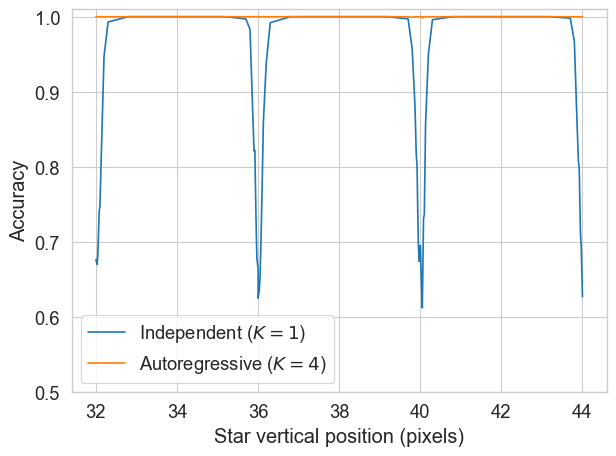}
    \includegraphics[width=.48\textwidth]{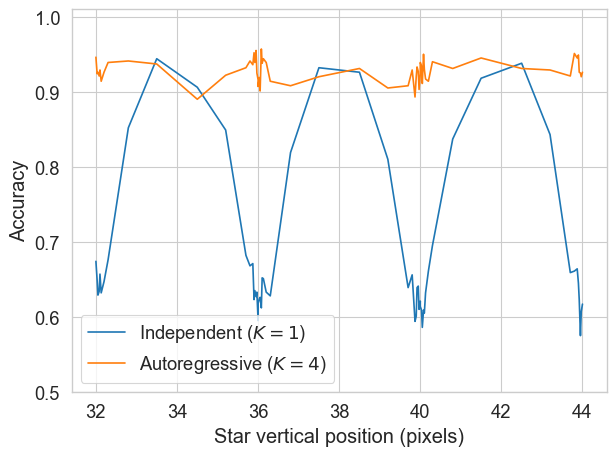}
    \caption{The probability under the variational distribution of correctly detecting exactly one star, for simulated images of one star in a variety of positions. The simulated stars are either bright with a magnitude of 20.47 (left) or faint with a magnitude of 22.21 (right).
    Tile boundaries are at positions that are multiples of four.
}
    \label{fig:correct_count_prob}
\end{figure}

\paragraph*{A blended star and galaxy}

Now we repeat parts of the previous experiment in the presence of blending. In particular, we consider a synthetic image of both a star on a tile border and a galaxy several pixels away (\cref{fig:with-blending}).
In this case, under the variational distribution, the marginal probability of detecting the galaxy, which is near the center of a tile, is 100\%, whereas the marginal probability of detecting the star, which is on a border, is 33\% in the upper tile and 75\% in the lower one.

If we condition on the event of no detections in the white squares of our checkerboard, which includes the tile below the star, then the probability of an object in the upper tile increases to 100\%, as desired.
Conversely, given the event that the white squares included a detection in the tile below the star, then it becomes unlikely that the upper tile contains an object.

\begin{figure}
    \centering
    \includegraphics[width=0.235\textwidth]{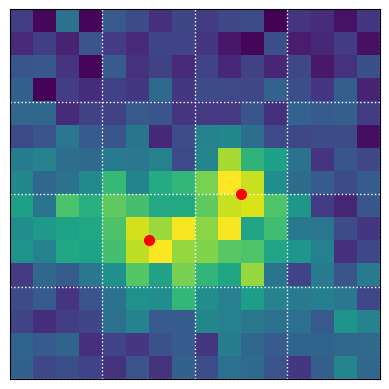}
    \includegraphics[width=0.24\textwidth]{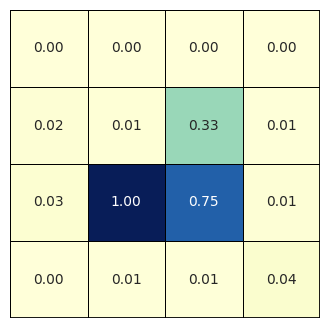}
    \includegraphics[width=0.24\textwidth]{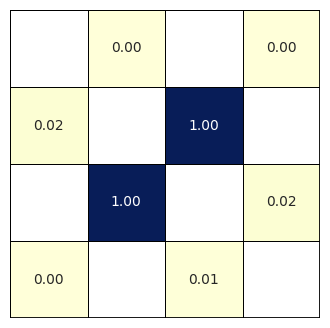}
    \includegraphics[width=0.24\textwidth]{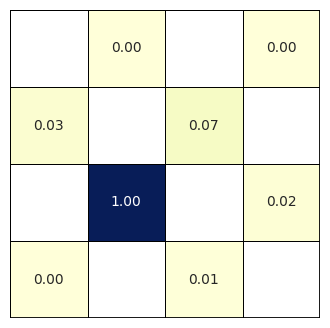}
    \caption{Analyzing a simulated image of a blended star on a tile border and a galaxy. Left) the image. Center-left) the marginal probability of detection in each tile under the BLISS variational distribution. Center-right) the conditional probability of detection given no detections in the white tiles. Right) the conditional probability of detection given a single object was detected in the tile (3, 3).}
    \label{fig:with-blending}
\end{figure}


\subsubsection{Simulations for assessing aggregate predictive performance}
\label{sec:sdss_field_synth_prediction}

We now assess the quality of our fitted variational distribution by considering its aggregate performance on a large held-out simulated data set.

\cref{tab:sdss_synth_performance} shows that the log-likelihood of the held-out ground truth catalogs under the variational posterior increases as the number of ranks $K$ increases.
The increase from increasing $K$ from one (independent tiling) to four ranks is highly repeatable. We can understand this increase intuitively through the lens of predictive performance: with more ranks, inferences for some tiles are based on more information, and no inferences are based on less information.

Although increases in log-likelihood are encouraging, establishing the practical importance of these increases requires additional probes.
It is possible, due to the limitations of forward KL minimization (cf. Appendix~C), that increasing the log-likelihood $q(z \mid x)$ could make our posterior approximations less useful.

\begin{table}
\centering
\begin{tabular}{c|c|c|c|c}
           & log-likelihood & precision & recall & F1 score \\
           \hline
Independent ($K=1$) & -227,337 (2284)  & 0.9734 (0.0005) & 0.8556 (0.0010)  & 0.9101 (0.0006) \\
Autoregressive ($K=4$) & -210,833 (2181)  & 0.9633 (0.0006) & 0.0.8861 (0.0009) & 0.9227 (0.0006) \\
\end{tabular}
\caption{BLISS performance with synthetic data that mimics typical SDSS images. The standard error appears in parentheses.}
\label{tab:sdss_synth_performance}
\end{table}

We first consider the precision and recall of the mode of various variational distributions.
The definitions of precision and recall that we use are generalizations of the standard notions of precision and recall, which make these concepts applicable to spatial data.
Specifically, we form a maximal bipartite matching between the detections (according to the mode of the variational distribution) and the objects (ground truth), with the restriction that a detection and an object can match only if they are within 2 pixels of each other.
Recall is the proportion of objects matched, and precision is the proportion of detections matched.

\cref{tab:sdss_synth_performance} shows that increasing the number of ranks $K$ increases the F1 score (that is, the harmonic mean of precision and recall).
As $K$ increases from one to four, recall improves by 3\%, which is a  21\% reduction in the number of missed objects. This increase in recall is accompanied by a smaller (1\%) decrease in precision. 

\subsubsection{Simulation-based calibration}
\label{sec:sdss_field_synth}

Although interpretable, precision and recall are not comprehensive metrics for assessing the quality of posterior approximations, as they are based on a single point estimate. The point estimate we use above, namely, the mode, assigns great importance to the 50\% threshold for detection, for example.

Now, we assess the quality of the variational distribution by comparing samples from it to samples from the posterior distribution.
We can readily sample the variational distribution $q(z \mid x)$ for arbitrary $x$, but we cannot do so for the exact posterior.
However, we can implicitly sample $p(z \mid x)$ once for each $x$ sampled from $p(x)$, because, by Bayes' rule, $p(z \mid x)p(x) = p(x \mid z)p(z)$; the latter distribution can be sampled through ancestral sampling. Thus, for arbitrary subsets of latent space $\mathcal I$ and $\mathcal J$, we can use Monte Carlo sampling to estimate
\begin{align*}
C(\mathcal I, \mathcal J) \coloneqq
    \mathbb E_{x \sim p(x), z' \sim p(z \mid x), z'' \sim q(z \mid x)} \left[  \mathbf 1\{z' \in \mathcal I \wedge z'' \in \mathcal J \} \right].
\end{align*}
If $q(z \mid x)p(x)$ is equal in distribution to $p(z \mid x)p(x)$, then $C(\mathcal I, \mathcal J) = C(\mathcal J, \mathcal I)$ for arbitrary $\mathcal I$ and $\mathcal J$.
Further, if $\mathcal I$ and $\mathcal J$ are themselves stochastic, then 
$\mathbb E_{\mathcal I, \mathcal J}C(\mathcal I, \mathcal J) = E_{\mathcal I, \mathcal J} C(\mathcal J, \mathcal I)$.
The implication of this result for us is that the quality of various posterior approximations can be assessed in terms of the extent to which they preserve this symmetry. This metric is based on the same principles as the test of variational simulation-based calibration (VSBC) proposed by \cite{yao2018yes} and further developed by \cite{modrak2025simulation}.

\begin{figure}[t]
    \centering
    \includegraphics[width=0.4\linewidth]{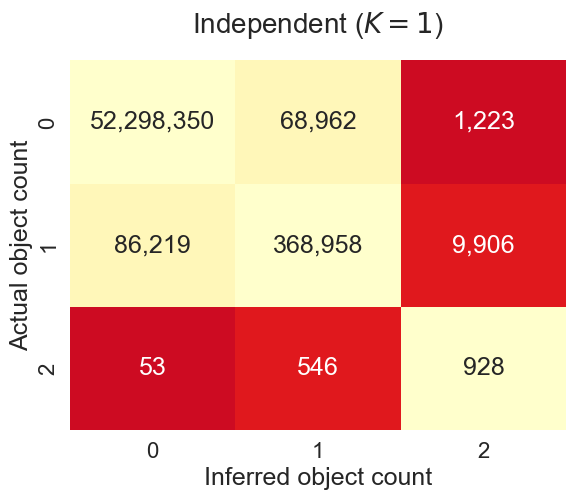}
    \hspace{2em}
    \includegraphics[width=0.4\linewidth]{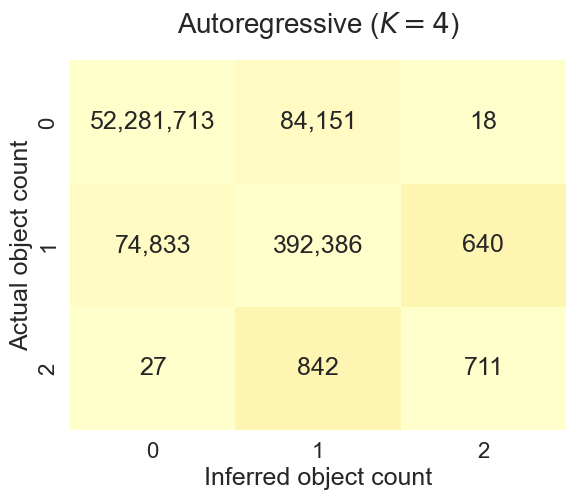}
    \caption{Confusion matrices for samples of catalogs from the BLISS variational distribution for synthetic images that mimic typical SDSS images. For each catalog, whether inferred or actual, source count is the number of objects in various $8\times$8-pixel regions.
    Cell colors indicate the difference between each transpose pair's entries as a proportion of the smaller of the two entries in the pair.
    }
    \label{fig:confusion_matrices}
\end{figure}

We take $\mathcal I$ and $\mathcal J$ to each be the event that the sampled catalog $z$ contains a particular number of astronomical objects in an $8\times{}8$-pixel region covered by four adjacent tiles. The particular block of adjacent tiles is selected uniformly at random, meaning that $\mathcal I$ and $\mathcal J$ are themselves stochastic.
The estimates of $E_{\mathcal I, \mathcal J} C(\mathcal J, \mathcal I)$ for the possible source counts can be presented as a confusion matrix, with the catalog implicitly sampled from $p(z \mid x)$ as the actual catalog used to generate the image, and the catalog sampled from $q(z \mid x)$ as the prediction. 
Typically in confusion matrices, good performance equates with high values along the diagonal. However, because our ``predictions'' are samples rather than, say, the mode of the variational distribution, we do not necessarily expect high values along the diagonal. Instead, we look for symmetry in ``transpose pairs'' of off-diagonal entries; that is, pairs of off-diagonal matrix entries with the property that if one has the coordinates $(i,j)$ the other has the coordinates $(j, i)$.

Figure~\ref{fig:confusion_matrices} provides these confusion matrices for variational distributions with $K=1$, 2, and 4, for a large held-out synthetic data set.
We are particularly interested in the transpose pair that describes how frequently one object is mistaken for two and vice versa.
The variational distribution with $K=1$ is 18 times more likely to predict that one object is two than to predict that two objects are one.
This extreme imbalance is largely corrected by the autoregressive variational distribution, which infers 640 times that one object is two and 842 times that two objects are one.

\subsubsection{Application to a two-megapixel SDSS image}
\label{sec:sdss_field}
We apply BLISS to an image produced by the Sloan Digital Sky Survey (SDSS) that is two megapixels in size (1488$\times$1488 pixels), covering 96 arcmin$^2$.
Although no ground-truth catalogs are available, we can use a catalog produced by the Dark Energy Camera Legacy Survey (DECaLS) as a proxy for ground truth.
This DECaLS catalog is based on deeper and higher-resolution images of the same region.

Note, however, that the DECaLS catalog is not a perfect proxy for ground truth for several reasons. First, because the DECaLS catalog is based on more recent images than SDSS, the positions of some nearby stars may have shifted multiple pixels during the time between surveys. Second, the DECaLS cataloging software makes errors of its own. Third, we consider just a subset of the DECaLS catalog containing hundreds of detectable objects, which is not enough to get stable estimates for percent-level changes in accuracy. For all of these reasons, this real data analysis should not be taken as a substitute for our early results with synthetic data, but rather should be viewed as complementary.

Approximate posterior inference requires 0.3 seconds for the full two-megapixel image; the image can be processed in a single forward pass. By requiring just a single forward pass of the network, inference requires little runtime on top of the time required for data loading, implying that the method is scalable to even the largest data sets.


\begin{figure}
    \centering
    \includegraphics[width=1\linewidth]{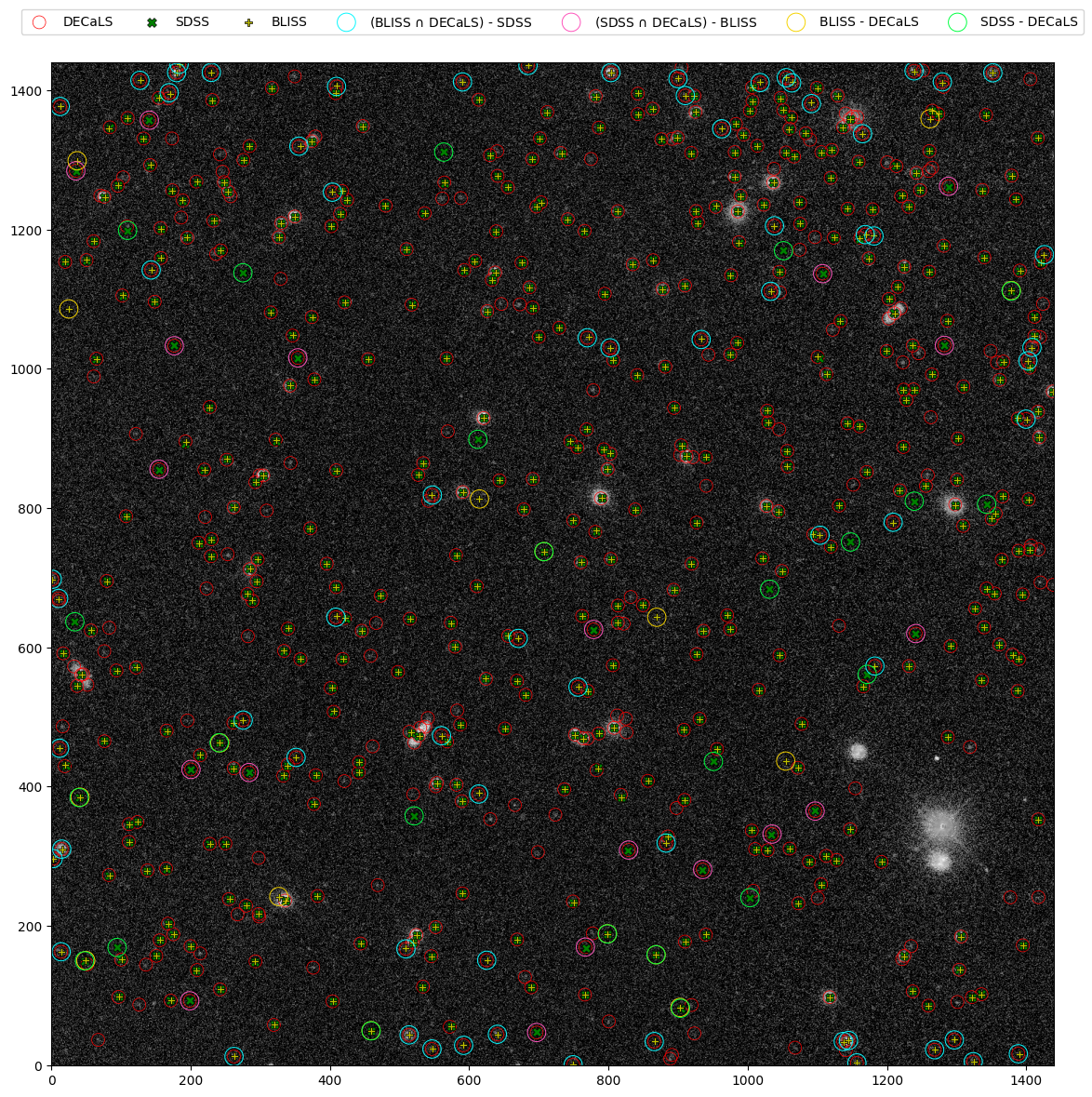}
    \caption{The studied SDSS image overlayed with detections by our software (BLISS), the image processing pipeline of the Sloan Digital Sky Survey (Photo), and the software of the Dark Energy Camera Legacy Survey (DECaLS).}
    \label{fig:sdss_four_colors}
\end{figure}

\Cref{fig:sdss_four_colors} shows the detections for BLISS (with four ranks), DECaLS (our proxy for ground truth), and the SDSS catalog (a competitor). The SDSS catalog was created by software called Photo~\citep{lupton2001sdss}. In contrast to DECaLS, Photo was applied to the same data as BLISS. We consider just objects brighter than 22.5 magnitude, which is approximately the detection threshold.
We omitted two regions of the image containing saturated pixels due to the presence of bright stars, as our simulator does not yet account for saturated pixels.

\begin{table}[t]
    \centering
    \begin{tabular}{l|ll}
                       & true positives & false positives \\
    \hline
    Photo~\citep{lupton2001sdss}   & 431  & 11  \\
    BLISS, $K=1$   & 488  &  28 \\
    BLISS, $K=4$ & 511  &  30 \\
    BLISS, $K=4$, conservative & 456 & 11 \\
    \end{tabular}
    \caption{BLISS detection performance for a particular SDSS image (run 94, field 1, camcol 12). The ``conservative'' version of BLISS has a lower detection threshold, allowing it to strictly outperform Photo.} 
    \label{tab:sdss_field_scores}
\end{table}

\cref{tab:sdss_field_scores} summarizes the detection performance of BLISS and Photo. Regardless of the number of ranks used, BLISS correctly detects at least 57 more objects than Photo while making at most 29 more false detections. By lowering the detection threshold, we created a ``conservative'' version of BLISS that strictly outperforms Photo: it has the same precision as Photo but correctly discovers 25 additional objects. 

For BLISS, the $K=4$ settings outperform the $K=1$ setting by roughly the amount expected based on simulations.
However, the overall recall of BLISS is 10\% lower than projected by simulation, which points to some degree of model misspecification. This degree of misspecification is likely acceptable given the superior performance of BLISS here, but performance could potentially be improved by enhancing the realism of the simulator.

\subsection{Case study 2: A crowded starfield} \label{sec:m2}

\begin{figure}
    \centering
    \includegraphics[height=2in]{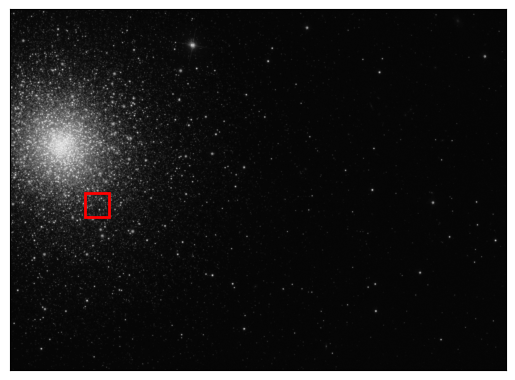}
    \hspace{1em}
    \includegraphics[height=2in]{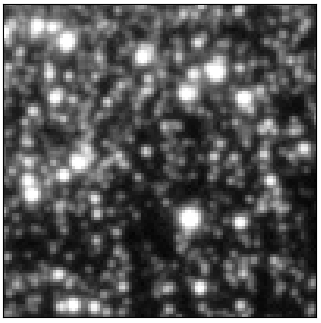}
    \caption{Left) An SDSS image of the M2 globular cluster (run 2583, camcol 2, field 136). The red box indicates the 100 $\times$ 100-pixel region we study. Right) An enlarged version of the studied region, rendered with arcsinh transformation and clipping.}
    \label{fig:m2-frame}
\end{figure}

\begin{figure}[t]
    \centering
    \includegraphics[width=1.0\textwidth]{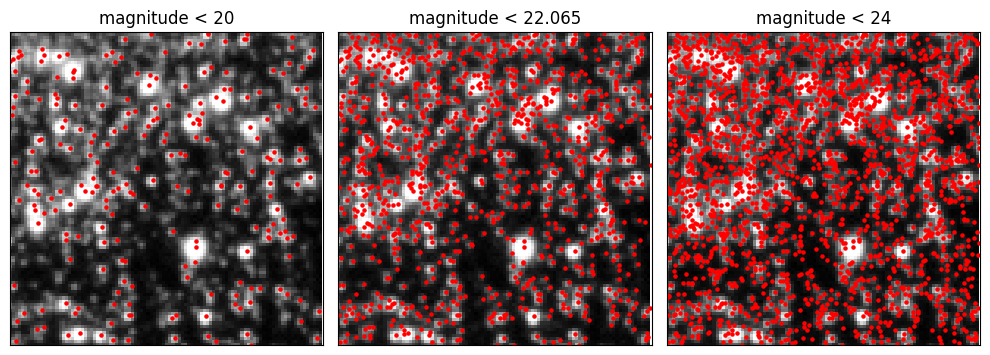}
    \caption{Three subsets of the HST catalog, each overlaid on the studied image (rendered using inverse sinisoidal sine normalization and clipping). From left to right, the panels use magnitude cuts that leave 133, 1114, and 2801 stars, respectively.}
    \label{fig:m2-hst-detections}
\end{figure}

The Messier 2 (M2) globular cluster is a gravitationally bound group of stars in the constellation Aquarius, located approximately 55,000 light years from Earth.
The density of objects in M2 is hundreds of times greater than in case study 1. In this setting, most objects visually overlap with many other objects. See \cref{fig:m2-frame,fig:m2-hst-detections}.
\citet{portillo2017improved} were the first to catalog images of M2 from the Sloan Digital Sky Survey (SDSS).
These authors used MCMC to perform this feat.

\subsubsection{Data generation}
\label{m2_data_gen}




For this case study, our prior distribution on star flux, which was denoted by a generic $F$ in \cref{sec:model}, resembles that of previous work studying M2. Like \citet{portillo2017improved}, \citet{feder2020multiband}, and \citet{liu2023variational}, we model star flux with a truncated Pareto distribution prior. Whereas  \citet{portillo2017improved} sets the truncation level based on the brightest point source in the image, we fit the parameters of our scientific prior to catalog data for regions nearby but disjoint from the studied region. Our Pareto shape parameter was $\alpha = 0.98$, whereas \citet{portillo2017improved} used $\alpha = 2$ and \citet{liu2023variational} used $\alpha = 0.5$. \citet[][Appendix D]{liu2023variational} explored $\alpha \in [0.25, 1]$, finding low sensitivity to $\alpha$ in this range. As in these previous works, our truncated Pareto prior has a parameterized minimum flux value, which we set to 24 magnitude: stars fainter than 24 magnitude are not just individually undetectable in SDSS---they are so faint that even in aggregate they are expected to contribute few photoelectrons to images, according to the CDF of a Pareto distribution with our setting of $\alpha$.

For the prior on the number of objects $S$, like \citet{liu2023variational}, we model $S$ as Poisson, with a rate set based on prior catalogs of nearby regions. In contrast, \citet{portillo2017improved} and \citet{feder2020multiband} use a geometric distribution as a prior on the number of objects to penalize the detection of faint stars and thus reduce the number of objects detected, while acknowledging that Poisson may be more consistent with a proper Bayesian treatment of the problem.

We generate 246,784 images that are each 112$\times$112 pixels, as well as the corresponding catalogs.
In expectation, 2518 stars are rendered per image.
The majority of these stars are essentially undetectable even without blending, given the background noise level; but they are bright enough in aggregate to affect the image.

\subsubsection{Fitting the variational distribution}

We set the tile size to $2\times 2$ pixels. With this setting, in expectation, 61.9\% of the tiles contain no stars, 29.7\% contain exactly one star, 7.1\% contain exactly two stars, and 1.3\% contain three or more stars.
We set the number of detections per tile to $M = 2$.
This setting of $M$ precludes detecting 1.5\% of objects, which reside in tiles with three or more objects.
However, it is unlikely that we would have been able to detect many of these objects with useful precision regardless of $M$: such densely packed objects are difficult to distinguish in survey images.

We consider two settings of $f_\mathcal{N}$ (cf. \cref{sec:amortized}) to assess how the type of information provided to the neighborhood network about objects affects performance. In the first setting, which is more minimalist, the neighborhood network considers only the positions of the objects within rank-$k$ tiles and the number of objects in pre-rank-$k$ tiles. In the second setting, we additionally include the fluxes of these objects. To simplify the presentation, the remainder of \cref{sec:m2} presents results solely for the first setting of $f_\mathcal{N}$, which produced better-calibrated posterior approximations. Results for the second setting appear in Appendix~E. The detrimental effect of providing more contextual information is a surprising result and is the focus of the discussion in \cref{sec:npe-discussion}. In brief, richer conditioning information may accentuate the distributional shift from training time, when ground-truth properties of neighboring objects serve as predictors (along with the image), to inference time, when predicted properties of neighboring objects are provided instead. This distributional shift is specific to NPE with hierarchical variational distributions, and it occurs even when the training and test data come from the same distribution.


\subsubsection{Simulations to assess aggregate predictive performance}
\label{m2_synth_prediction}

\begin{table}
\centering
\begin{tabular}{c|c|c|c|c}
           & log-likelihood & precision & recall & F1 score \\
           \hline
1 rank & -4,216,070 (15,209)  & 0.600 (0.0003) & 0.544 (0.0002)  & 0.571 (0.0002) \\
4 ranks & -3,454,335 (15,554)  & 0.627 (0.0003) & 0.555 (0.0002) & 0.589 (0.0002) \\
\end{tabular}

\caption{BLISS performance with synthetic data that mimics SDSS images of the M2 globular cluster.
The standard error appears in parentheses.
\label{tab:ind_vs_checkerboard}
}
\end{table}

\cref{tab:ind_vs_checkerboard} reports on BLISS's performance with held-out synthetic data sampled from the same distribution as the training data.
As the number of ranks increases, the log-likelihood of ground-truth catalogs under the fitted variational distribution increases. Although the standard deviations reported in parentheses are similar in magnitude to the differences between log-likelihoods, these differences are nevertheless highly repeatable, as they are computed using the same held-out data. With more ranks, more information is available for making inferences.

Next, we assess whether increasing the number of ranks $K$ improves precision and recall for point estimates based on the mode of the fitted variational distribution. 
\cref{tab:ind_vs_checkerboard} shows that by using four ranks instead of one, we attain 1.1\% higher recall and 2.7\% higher precision.
Here, recall is the proportion of brightest 1114 objects detected (i.e., those brighter than 22.065 magnitude), where an object is detected if its position is inferred to within 0.5 pixels (in Euclidean distance) and its flux is inferred to within 0.5 magnitude.
Precision is the proportion of inferred objects with magnitude less than 22.065 that match an object in the ground truth catalog, where a match is defined as above.
These definitions of precision and recall, including the distance and flux difference thresholds, are selected for compatibility with previously published results.


\subsubsection{Simulation-based calibration}
\label{sec:m2_synth}

\begin{figure}[t]
    \centering
    \includegraphics[width=0.42\linewidth]{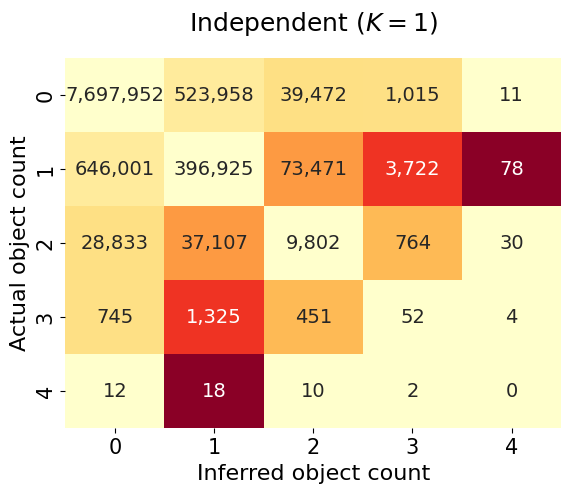}
    \hspace{20px}
    \includegraphics[width=0.42\linewidth]{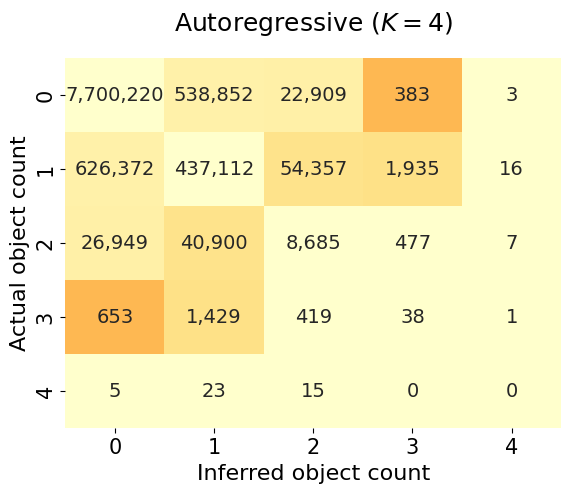}
    \caption{Confusion matrices for samples of catalogs from the BLISS variational distribution for synthetic images that mimic the SDSS image of the M2 globular cluster. For each catalog, whether inferred or actual, object count is the number of objects in a thin region along tile boundaries.
    Each cell's color indicates the difference between the entries of the corresponding transpose pair as a proportion of the smaller of the two entries (for transpose pairs with at least 100 observations in each cell).
    }
    \label{fig:m2_confusion_matrices}
\end{figure}

We use the procedure explained in \cref{sec:sdss_field_synth} to assess the calibration of our variational distribution. This time the events we consider involve the number of objects in a region that straddles the border of pairs of adjacent tiles. More specifically, we consider the number of objects in a region of either $2 \times 0.25$ or $0.25 \times 2$ pixels that overlaps with two $2\times{}2$-pixel tiles, with half of its area in each.

Figure~\ref{fig:m2_confusion_matrices} provides the resulting confusion matrices.
The variational distribution with $K=1$ exhibits poor calibration. It is twice as likely to infer that one object is two than to infer that two objects are one. It is nearly twice as likely to infer that one object is three as it is to infer the reverse. It is nearly twice as likely to infer that two objects are three than vice versa.
The variational distribution with $K=4$ overstates the number of objects somewhat too, but to a much lesser extent.

\subsubsection{Application to an SDSS image of M2}

We apply BLISS to the 100$\times$100-pixel region of an SDSS image of M2 shown in the right panel of \cref{fig:m2-frame} and, with overlaid HST catalogs, in \cref{fig:m2-hst-detections}. This region has also been imaged by the Hubble Space Telescope (HST), which, as a space-based telescope, produces sharper images than the SDSS telescope, which is ground-based. Ground-based telescopes, though not as sharp, typically cover a much larger region of the sky. Although exact ground truth is unknowable, the regions where both have coverage provide a convenient validation strategy: let space-based measurements serve as a proxy for ground truth. This is the strategy taken by \citet{portillo2017improved}, \citet{feder2020multiband}, and \citet{liu2023variational}, all of whom used this 100$\times$100-pixel image of M2 and the HST catalog as a proxy for ground truth. Using the same image and catalog facilitates comparison with these previous works.


Next, we compare BLISS with three existing methods using the metrics and results reported in previous publications. The first, DAOPHOT~\citep{stetson1987daophot}, is widely used and explicitly designed for crowded starfields like M2. It is a non-probabilistic approach. The second, PCAT, was developed through two publications~\citep{portillo2017improved, feder2020multiband} and is based on MCMC.
The third, StarNet~\citep{liu2023variational}, uses variational inference with independent tiling.

\begin{table}[t]
\centering
\begin{tabular}{l|ccc|cc}
\toprule
& & & & \multicolumn{2}{c}{\#Stars} \\
     Method &   precision &  recall &   F1 score &  mean & (q-5\%, q-95\%)\\
\midrule
  DAOPHOT~\citep{stetson1987daophot} &  0.65 &  0.20 &      0.31 &     357 & -- \\
     PCAT~\citep{feder2020multiband} &  0.37 &  0.55 &      0.44 &    1672 & (1664, 1680)\\
 StarNet~\citep{liu2023variational}  &  0.48 &  0.53 &      0.50 &    1462 & (1430, 1497)\\
 BLISS (ours, with $K = 4$)          &  0.58 &  0.55 & 0.57 &  1171   & (1146, 1199)\\
\bottomrule
\end{tabular}
\caption{Performance on the SDSS image of the M2 globular cluster.
For probabilistic methods,
the ``\#stars" columns provide the posterior mean along with the 5th and 95th posterior percentiles
for the number of stars.
The number of stars in the ground-truth Hubble catalog is 1114. }
\label{tab:shootout}
\end{table}

\cref{tab:shootout} summarizes our results. BLISS makes a significant improvement in the F1 score.
At a recall rate slightly higher than that of StarNet and similar to that of PCAT, BLISS achieves 12\% higher precision than StarNet and 23\% higher precision than PCAT. All the probabilistic methods substantially outperform DAOPHOT, which has extremely low recall. Further, StarNet and PCAT made use of multiband images to attain these results, whereas BLISS used only the SDSS r-band image. (Properly modeling multiband SDSS images is not a trivial extension because different image bands are not pixel-aligned, so we defer it to future work.) In a sense, these results are even more noteworthy given that BLISS made use of less image data to form these estimates.

Our earlier results with synthetic data (\cref{m2_synth_prediction}) suggest that the autoregressive aspect of BLISS, assessed by comparing BLISS with $K = 1$ to BLISS with $K = 4$, contributes 1.6\% to the F1 score. 
Running both versions of BLISS ($K = 1$ and $K = 4$) on the real M2 data is consistent with this perspective, with $K = 4$ typically showing better performance, albeit with some inconsistency due to random initializations and the small sample size. (A difference of 1.6\% in the recall of 1114 stars would amount to just 18 stars.)
It follows that the majority of the 7\% F1 score improvement of BLISS over StarNet reported in \cref{tab:shootout} is due to changes other than the addition of autoregressive tiling, such as an improved neural network architecture (deeper, fully convolutional, and with more skip connections), a similar-but-not-identical prior distribution (cf. \cref{m2_data_gen}), and a more flexible variational approximation of the dependence between objects within the same tile.

For probabilistic methods, the rightmost columns of \cref{tab:shootout} assess the calibration of the approximate posterior. The 90\% credible intervals for the number of objects show that StarNet and PCAT were not well calibrated, as the true number is 1114.
Our results show that miscalibration is less of an issue for BLISS than for StarNet and PCAT, but that the truth remains outside of the 90\% credible interval. Rerunning BLISS with $K = 1$ does not change this, showing that autoregressive tiling is not the cause. That this miscalibration did not occur with synthetic data suggests that miscalibration is due to model misspecification, which is to be expected to some degree. For example, our model of the PSF is likely imperfect. We examine the effect of model misspecification further in \cref{fig:mag-bin-performance}.

\begin{figure}
    \centering
    \includegraphics[width=0.9\textwidth]{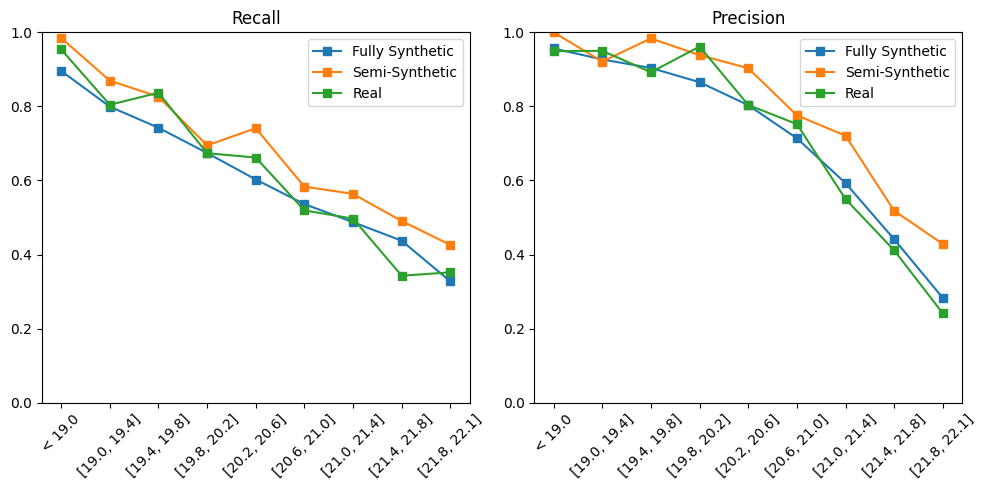}
    \caption{BLISS's performance (precision and recall) binned by object magnitude, with three types of data: real (i.e., the 100$\times$100-pixel SDSS subimage); semi-synthetic (pixel intensities drawn from the BLISS generative model, conditioned on the ``ground truth'' HST catalog); and synthetic (both catalog and image drawn from the BLISS model).}
    \label{fig:mag-bin-performance}
\end{figure}

\cref{fig:mag-bin-performance} shows BLISS performance as a function of star magnitude. It also compares BLISS's performance with the real SDSS data to its performance with synthetic and semi-synthetic data. The synthetic data is a sample from our generative model, whereas the semi-synthetic data is a conditional sample from our generative model of pixel intensities given the real M2 catalog.
The performance difference with semi-synthetic and real data suggests that the misspecification of the conditional likelihood is non-negligible, resulting in a 5--10\% reduction in precision and recall. There also appears to be non-negligible misspecification of the catalog prior, as precision and recall are significantly lower for the real data than for the semi-synthetic data.
This misspecification, though undesirable, is not so severe that it prevents us from achieving state-of-the-art performance.

\section{Discussion} \label{sec:discussion}

We proposed a novel amortized variational distribution for astronomical object detection with a spatially autoregressive structure and the desirable theoretical property that, by construction, its dependency structure matches that of the posterior.
Although our two case studies have quite different settings, in many ways, both demonstrated the value of autoregressive tiling.
Despite this success, our method has limitations due to its reliance on NPE (\cref{sec:npe-discussion}) and
the form of the variational distribution (\cref{sec:gp-discussion}).
Ultimately, probabilistic catalogs will need to be integrated into science workflows, which can be done in several ways that propagate uncertainty to downstream analyses (\cref{sec:asto-discussion}).

\subsection{The potential and pitfalls of neural posterior estimation}
\label{sec:npe-discussion}

This work is among the few to use NPE, which is likelihood-free, in a setting in which efficient likelihood evaluation is possible.
Our results suggest that the use of likelihood information is not essential to achieving good performance.

This work is also among the few to use NPE in the context of a hierarchical (e.g., autoregressive) variational distribution.
Our work identifies a drawback of NPE that is specific to fitting hierarchical variational distributions, which was surprisingly significant in our case study with the M2 globular cluster.
In this case study, we initially selected neighborhood network $f_\mathcal{N}$ to include as much detail as possible about nearby objects. However, doing so caused the networks to encounter quite different data during training (when ground truth is used) than during inference (when previously detected objects are used).
We term this issue ``exposure bias'' to connect it to a related issue of the same name in autoregressive text generation.

We worked around this problem by limiting the information that $f_\mathcal{N}$ considers, no longer considering the flux of objects, and, for neighboring objects in other tiles, no longer considering their positions within the tile. Excluding this information reduces not only the log-likelihood (which is not subject to exposure bias) but also the precision and recall of predictions based on the mode of our variational distribution (which is). In fact, the best F1 score reported in \cref{tab:shootout} was not our best. Including flux information increases it by 1\%. However, doing so degrades the calibration of the fitted variational distribution. Appendix~E contains a version of Figure~\ref{fig:m2_confusion_matrices} that was generated with richer conditioning information.

In contrast, exposure bias was not an issue in the first case study, suggesting that the higher object density in the second case study is part of the story. The higher object density required us to use smaller tiles in defining the variational distribution and to detect multiple objects in some tiles.
There were also many more tiles with detections in the second case study: with many detections, there are more ways for sampled catalogs (encountered during inference) to differ from the real catalogs (encountered during training).

For exposure bias in autoregressive text generation, adversarial penalties, such as professor forcing \citep{lamb2016professor}, are a widely used mitigation strategy, as discussed in Appendix~C. This may also be an effective mitigation strategy for NPE, as it would encourage the network to ignore the subtle differences between synthetic training data and real telescope images.

\subsection{Limitations of tile-based variational distributions}
\label{sec:gp-discussion}
Variational distributions are essentially never flexible enough to recover the exact posterior distribution, but sometimes they are nevertheless adequate for the task at hand. Other times, however, more exactness is needed. In these cases, the variational distribution can typically be made more flexible to remedy specific limitations. Below, we discuss two limitations of the proposed variational distribution. 

\paragraph*{Bounded number of detections per tile}
We modeled objects as either ``of interest'' or ``nuisance'' (\cref{sec:of-interest}). 
The number of objects of interest $M$ is bounded per tile, and this bound needs to be fairly small because the runtime has a factorial dependence on $M$ (cf. \cref{eq:rank-k-tiles-marginal-order}). 
For many applications, by making the tiles small (e.g., a few pixels in size), we can ensure that small $M$ suffices: only a few objects centered in such a small tile can be recovered with useful precision anyway.

If larger $M$ were needed, we could potentially get a workable lower bound on the log density of the variational distribution without explicitly considering all $M!$ permutations by using a bipartite matching algorithm. Replacing the expectation in \cref{eq:rank-k-tiles-marginal-order} with the most probable bipartite match between predictions and truth amounts to replacing a softmax with a ``hard'' max, with all probability masses and densities encoded on a logarithmic scale.

However, it is unsatisfying to truncate the support of the approximate posterior at all because under the prior the number of objects per image, and hence per tile (due to Poisson thinning), follows a Poisson distribution, which has unbounded support. Allowing the number of detections of interest $M$ to be random is one way forward. The challenges with random $M$ tend to be computational rather than conceptual: computer hardware, and especially GPUs, are most efficient when performing vectorized operations on tensors of fixed dimensions, for example, when all vectors can be padded to have finite length $M$.

\paragraph*{Computation required for long-range dependence}
Our variational distribution has the structure of a $K$-color checkerboard.
Larger $K$ requires more sequential computation but can recover dependence between detections that are more distant from each other.
Although there is, in general, a trade-off between computational expense and approximation fidelity, if objects are small in terms of spatial extent relative to tile size $T$, we do not have to choose: because of conditional independence within the posterior distribution, it suffices to use relatively small $K$ (cf. Appendix~B, Theorem~4).

A small fraction of galaxies, however, are large. Because $R$ is determined by the largest possible object, the theorem implies that a large $K$ is needed, which could be computationally burdensome.

A potential solution is to define the variational distribution in terms of multiple checkerboards, each with different tile sizes. 
The checkerboards with large tiles would be responsible for detecting larger objects, thus reducing the need for more colors in checkerboards with small tiles while still recovering long-range dependencies.
These checkerboards would themselves be ranked, perhaps to detect the largest objects first, thus adding additional autoregressive structure to our variational distribution.




\subsection{Implications for astronomical practice}
\label{sec:asto-discussion}

Historically, progress in observational astronomy has been driven by advances in telescope technology~\citep{ivezic2019lsst}, mediated by improvements in astronomical catalogs, which are the starting points for most analyses~\citep{schafer2015framework}. Each generation of telescopes provides higher-fidelity images of the Universe than the last. The Sloan Digital Sky Survey (SDSS), which began collecting images in 2000, featured an optical telescope equipped with a 2.5-meter primary mirror; the Dark Energy Survey (DES), which began in 2006, used a 4-meter mirror; and the Large Survey of Space and Time (LSST), which is scheduled to begin collecting data in 2025, will feature an 8.4-meter mirror. These mirrors weigh thousands of kilograms, and the corresponding telescopes cost hundreds of millions of dollars to construct.

By improving the accuracy of catalogs, the proposed method has the potential to make catalogs derived from today's telescopes more like the ones derived from the telescopes of the future, without the delay and the expense of constructing new telescopes.

However, transitioning from deterministic to probabilistic catalogs requires adjustment in how catalogs are used, as existing scientific pipelines, which typically infer properties of populations of astronomical objects~\citep{schafer2015framework}, are designed to receive deterministic catalogs as input.
For instance, redMaPPer~\citep{rykoff2014redmapper} uses deterministic catalogs to find and characterize galaxy clusters, which are made up of hundreds of nearby galaxies; CHIPPR~\citep{malz2022obtain} estimates redshift distributions from photo-Z estimates for individual galaxies; and the method of \cite{schneider2017probabilistic} estimates cosmological parameters from catalogs of observed ellipticities.

One simple way to use probabilistic catalogs is to summarize them with a point estimate that is provided as input to existing scientific pipelines.
This practice fails to propagate uncertainty to downstream analyses, although it may not fully negate the value of probabilistic cataloging: the point estimates have the potential to be more accurate than algorithmically derived point estimates.

An alternative approach is to input samples of catalogs from our posterior approximation into existing scientific pipelines and rerun the pipeline once for each sample to produce an estimate of a population-level quantity. Collectively, the runs would give us a distribution of estimates; the variation in these estimates would represent uncertainty.
This approach is appealing in that it allows reuse of existing software pipelines for population-level analysis, while still accounting for catalog uncertainty.
One concern, however, is that seemingly minor miscalibrations in our posterior approximation for each object, which are inevitable to some extent, could compound in the population-level analysis, leading to significant miscalibration in the distribution of estimates of population-level parameters.

A third approach to population-level analysis is to modify our simulator to directly model population-level parameters, rather than relying on existing scientific pipelines. For instance, we can generate astronomical images that exhibit galaxy clusters, weak lensing, and strong lensing.
By incorporating models of these phenomena into our simulator, it becomes possible to directly estimate these quantities using neural posterior estimation. 
In principle, the simulation could extend all the way to sampling the parameters of the $\Lambda$CDM model of cosmology, which characterizes the origin and evolution of the Universe.
The current work lays the foundation for such extensions.



\begin{funding}
This material is based on work supported by the National Science Foundation under Grant No. 2209720 and the U.S. Department of Energy, Office of Science, Office of High Energy Physics under Award Number DE-SC0023714.
\end{funding}

\vspace{12pt}
\textbf{Software: } The software described in this article is available from \url{https://github.com/prob-ml/bliss}. Code specifically for reproducing the results in this article is contained within the \texttt{case\_studies/spatial\_tiling} directory.

\bibliographystyle{imsart-nameyear} 
\bibliography{references}       

\clearpage
\appendix
\counterwithin{figure}{section}
\renewcommand{\thefigure}{\thesection.\arabic{figure}}

\section{Properties of the generative model}
\label{sec:posterior-ind}

Recall the following notation from Sections~2 and 3 of the main text.
The data is an $H\times{}W$-pixel image $x$.
The natural number $T$ is a user-defined tile side length in pixels, typically set to 2 or 4.
The constants $H' = H / T$ and $W' = W / T$, assumed to be natural numbers, are the image dimensions measured in tiles.
The set of all tile indices is $\Omega \coloneqq \{1,\ldots,H'\}\times\{1,\ldots,W'\}$.
For a subset of the tile indices $L \subset \Omega$, we denote the corresponding latent random variables $y_L \coloneqq \{y_i : i \in L\}$.
We refer to the elements of $y_\Omega$ as the tile latent variables.

With the change of variables proposed in Section 3.1, due to the thinning property of the Poisson distribution \citep[Section 3.7.2]{durrett2019probability}, our catalog prior factorizes as
\begin{align}
    p(y) = \prod_{(h, w) \in \Omega} p(y_{h, w}).
\end{align}

Nevertheless, posterior independence between pairs of tile latent variables, even those corresponding to distant tiles, does not generally hold for this model: random variables are coupled through the chains of potential objects between them. However, conditional independence does hold for tile latent variables that are sufficiently far apart, given the other tile latent variables, as the following proposition shows.

\begin{restatable}[]{myprop}{posteriorind}
\label{posterior-ind}
Let $A \subset \Omega$ and $C \subset \Omega$ be subsets of the tile indices such that for all $a \in A$ and $c \in C$, $\|a - c\|_\infty \ge 2R/T + 1$. Let $B \coloneqq \Omega \setminus (A \cup C)$. Then, in the posterior,
\begin{align*}
    y_A \ind y_C \mid y_B.
\end{align*}
\end{restatable}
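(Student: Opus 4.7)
The plan is to derive the claim from two ingredients: the tile-wise factorization of the prior (Poisson thinning) and a purely geometric observation that, under the separation condition, no single pixel can be influenced simultaneously by an object centered in a tile of $A$ and by an object centered in a tile of $C$. Once the likelihood is shown to split pixel-wise into factors that do not mix $y_A$ with $y_C$, the posterior will factorize over $y_A$ and $y_C$ given $y_B$, and the conditional independence follows from the elementary factorization criterion.

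Step 1 is to record the prior factorization $p(y)=\prod_{\ell \in \Omega} p(y_\ell)$, which follows from the Poisson thinning property cited at the start of Appendix~A. Step 2 is the geometric lemma. Suppose some pixel $(h,w)$ were influenced by both some $u_a$ centered in tile $\ell_a=(h_a',w_a') \in A$ and some $u_c$ centered in tile $\ell_c=(h_c',w_c') \in C$. The support condition on $\xi_{h,w}$ forces $\|u_a-(h,w)^\top\|_\infty \le R$ and $\|u_c-(h,w)^\top\|_\infty \le R$, so by the triangle inequality $\|u_a-u_c\|_\infty \le 2R$. On the other hand, the hypothesis $\|\ell_a-\ell_c\|_\infty \ge 2R/T + 1$ implies (say in the coordinate achieving the max) WLOG $h_c'-h_a' \ge 2R/T+1$. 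Since the tiles are the half-open squares $[(h'-1)T,h'T)\times[(w'-1)T,w'T)$, we have $u_a[1]<h_a'T$ and $u_c[1]\ge (h_c'-1)T$, so
\begin{equation*}
u_c[1]-u_a[1] \;>\; (h_c'-1)T-h_a'T \;=\; (h_c'-h_a'-1)T \;\ge\; 2R,
\end{equation*}
contradicting $\|u_a-u_c\|_\infty \le 2R$. So no pixel is influenced by tiles in both $A$ and $C$.

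Step 3 uses this lemma to partition the pixel grid into three disjoint sets: $P_A$ (pixels influenced by some tile in $A$), $P_C$ (pixels influenced by some tile in $C$), and $P_B$ (everything else). By the lemma, $P_A \cap P_C = \emptyset$. Because $\lambda_{h,w}(y)$ is a sum of per-object contributions that each vanish outside radius $R$, $\lambda_{h,w}$ for $(h,w)\in P_A$ depends only on $(y_A,y_B)$; for $(h,w)\in P_C$ only on $(y_B,y_C)$; and for $(h,w)\in P_B$ only on $y_B$. Since the pixel likelihoods are independent given $y$, the full likelihood factorizes as
\begin{equation*}
p(x\mid y) \;=\; g_A(x,y_A,y_B)\,g_B(x,y_B)\,g_C(x,y_C,y_B).
\end{equation*}
Combining this with Step 1 and Bayes' rule yields
\begin{equation*}
p(y_A,y_C\mid y_B,x) \;\propto\; \bigl[p(y_A)\,g_A(x,y_A,y_B)\bigr]\,\bigl[p(y_C)\,g_C(x,y_C,y_B)\bigr],
\end{equation*}
with $y_B$ and $x$ held fixed, which is exactly the factorization characterizing $y_A \ind y_C \mid y_B$ in the posterior.

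The only delicate part is Step 2: I need the strict inequality $u_c[1]-u_a[1] > 2R$ (not merely $\ge$) to contradict the triangle-inequality bound $\le 2R$. This is why the half-open definition of tiles together with the ``$+1$'' in the bound $2R/T+1$ is essential, and why an off-by-one in the separation threshold would break the argument. Everything else is bookkeeping: the support condition on $\xi$ that the paper imposes is exactly what lets the pixel-wise likelihood partition cleanly, and the Poisson thinning gives a product prior so that no cross-terms linking $y_A$ and $y_C$ can sneak in through $p(y)$.
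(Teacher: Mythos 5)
Your proof is correct and follows essentially the same route as the paper's: factorize the prior over tiles via Poisson thinning, split the joint into factors that never mix $y_A$ with $y_C$ given $y_B$, and conclude by the factorization criterion for conditional independence. Your Step~2 makes explicit the geometric separation argument (including the role of the half-open tiles and the $+1$ in the threshold) that the paper leaves implicit when it asserts the decomposition $p(x,y)=\varphi_{AB}\,\varphi_{B}\,\varphi_{BC}$ with disjoint pixel groups $A^+$, $B^-$, $C^+$.
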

\begin{proof}
\label{posterior-ind-proof}
Let $A^+$ (resp. $C^+$) refer to the tile indices contained in $A$ (resp. $C$) as well as those of a tile within $R$ of any tile in $A$ (resp. $C$). Let $B^- \coloneqq \Omega \setminus (A^+ \cup C^+)$ refer to the tile indices contained in neither $A^+$ nor $C^+$. 
Observe that $y = y_{A^+} \bigsqcup y_{B^-} \bigsqcup y_{C^+}$ and $y = y_A \bigsqcup y_B \bigsqcup y_C$, where $\bigsqcup$ denotes disjoint union. 

Let
\begin{align*}
\varphi_{AB} \coloneqq \prod_{(h,w) \in A} p(y_{h,w}) \prod_{(h, w) \in A^+} p(x_{h,w} \mid y_A, y_B),
\end{align*}
\begin{align*}
\varphi_{B} \coloneqq \prod_{(h,w) \in B} p(y_{h,w}) \prod_{(h, w) \in B^-} p(x_{h,w} \mid y_B),
\end{align*}
and
\begin{align*}
\varphi_{BC} \coloneqq \prod_{(h,w) \in C} p(y_{h,w}) \prod_{(h, w) \in C^+} p(x_{h,w} \mid y_B, y_C).
\end{align*}
Because $y$ follows a spatial Poisson process, the density of the joint distribution may be expressed as a product over tiles:
\begin{align*}
    p(x, y)
    &= \prod_{h,w} p(y_{h,w}) p(x_{h,w} \mid \{y_{i,j} : |i -h| \le \lceil R \rceil \wedge |j - w| \le \lceil R \rceil\})\\
    &= \varphi_{AB} \varphi_{B} \varphi_{BC}.
\end{align*}
Using this equality, it follows that
\begin{align*}
    p(y_A, y_C \mid y_B, x)
    &= \frac{p(x, y)}{\int p(x, y_B)}\\
    &= \frac{p(x, y)}{\int p(x, y) dy_A d y_C}\\
    &= \frac{\varphi_{AB} \cancel{\varphi_{B}} \varphi_{C}}{\left[ \int \varphi_{AB} dy_A\right] \cancel{\varphi_{B}} \left[ \int \varphi_{BC} dy_C \right]}\\
    &= \frac{\varphi_{AB}\varphi_{B}\left[ \int \varphi_{BC} dy_C\right]}{\left[ \int \varphi_{AB} dy_A\right]\varphi_{B}\left[ \int \varphi_{BC} dy_C\right]}
    \frac{\left[ \int \varphi_{AB} dy_A\right] \varphi_{B} \varphi_{BC}}{\left[ \int \varphi_{AB} dy_A\right] \varphi_{B} \left[ \int \varphi_{BC} dy_C\right]}\\
    &= \frac{p(y_A, y_B, x)}{p(y_B, x)} \frac{p(y_C, y_B, x)}{p(y_B, x)}\\
    & = p(y_A \mid y_B, x) p(y_C \mid y_B, x).
\end{align*}
The proposition follows.
\end{proof}


\newpage
\section{Shared structure between the model and the variational distribution}
\label{sec:blanket}
The variational family can be configured to match the independence assertions of the posterior distribution. To show this, we represent the independence assertions of both the posterior distribution and the variational distribution as Bayesian networks (BN), which are also known as directed graphical models.
In a BN, random variables are represented as nodes, potential conditional dependencies are represented by directed edges, and conditional independencies are denoted by the absence of edges \citep{koller2009probabilistic}.

The following notation is useful in expressing the BNs. For tile index $\ell \in \Omega$ and radius $r \in \mathbb R$, let the nearby tile variables $z_\ell^{r} \coloneqq \{z_{\ell'} : \|\ell - \ell'\|_\infty \le r, \, \, \forall \ell' \in \Omega\}$.
For a generic subset of tile indices $L \subset \Omega$, the corresponding latent random variables are $z_L \coloneqq \{z_\ell : \ell \in L\}$.
For any pixel index $a = (h, w)$, let $x_\ell \coloneqq x_{h,w}$.
Let $\Upsilon \coloneqq \{1, \ldots, H\} \times \{1, \ldots, W \}$.
For a generic subset of pixel indices $A \subset \Upsilon$, the corresponding pixels are $x_A \coloneqq \{x_a : a \in A\}$.
For $a \in \Upsilon$ and $r \in \mathbb R$, let the nearby pixels $x_a^{r} \coloneqq \{x_{a'} : \|a - a'\|_\infty \le r, \, \, \forall a' \in \Upsilon\}$.

\paragraph*{The variational distribution BN}
Using this new notation, the proposed variational distribution can be expressed as
\begin{align}
    q(z \mid x) = \prod_{k = 1}^K \prod_{\ell \in C_k} q(z_\ell \mid z_{<k} \cap z_{\ell}^{r_{\mathcal N}}, x_{(\ell - 1)T + 1}^{r_\mathcal X}).
    \label{eq:overall_q}
\end{align}
This distribution corresponds to the Bayesian network defined by a graph $G_q$ with vertices
\begin{align*}
    V_q \coloneqq \{x_a : a \in \Upsilon\} \cup \{z_\ell : \ell \in \Omega \}
\end{align*}
and directed edges
\begin{align*}
    E_q \coloneqq &\{(x_a, z_\ell) : \|a - (T(\ell - 1) + 1)\|_\infty \le r_{\mathcal X}, \,\, \forall  a \in \Upsilon,\,\, \forall \ell \in \Omega\} \\
    &\cup \{(z_\ell, z_{\ell'}) : \| \ell - \ell' \|_\infty \le r_{\mathcal N} \wedge \Psi(\ell) < \Psi(\ell'), \,\, \forall \ell \in \Omega,\,\, \forall \ell' \in \Omega\}.
\end{align*}

As a conditional distribution, our variational distribution does not make assertions about the dependencies or independencies among the variables conditioned on (i.e., the pixel intensities).
By convention, the exclusion of edges from $E_q$ between pixel intensities does not denote conditional independence; rather, the relationship among conditioned-on variables is simply beyond the scope of the graph.

\paragraph*{The generative model BN}

Our generative model, expressed in the variables of Section~3 of the main text, is $p(z, e, x)$, which has Bayesian network structure $G_p$ with vertices
\begin{align*}
    V_p \coloneqq\{x_a : a \in \Upsilon\} \cup \{z_\ell : \ell \in \Omega \} \cup \{e_\ell : \ell \in \Omega \}
\end{align*}
and directed edges
\begin{align*}
    E_p \coloneqq &\{(z_\ell, x_a) : \|a - (T(\ell - 1) + 1)\|_\infty \le \lceil R \rceil, \,\, \forall  a \in \Upsilon,\,\, \forall \ell \in \Omega\}\\
    &\cup \{(e_\ell, x_a) : \|a - (T(\ell - 1) + 1)\|_\infty \le \lceil R \rceil, \,\, \forall  a \in \Upsilon,\,\, \forall \ell \in \Omega\}\\
    &\cup \{(z_\ell, e_\ell) : \,\, \forall \ell \in \Omega\}.
\end{align*}
In relating this Bayesian network to our earlier model description, we use the thinning property of the Poisson distribution \citep[Section 3.7.2]{durrett2019probability} to express a draw of the image prior as a draw for each of the $H'\times{}W'$ tiles.

The edges of the form $(z_\ell, e_\ell)$ are included in $E_p$ because of the flexibility with which objects can be classified as ``of interest'' or not (see Section 3.2 of the main text).
If our interest in objects were solely due to their fluxes in relation to a pre-defined threshold, these edges would not be needed.
On the other hand, if some objects in tile $\ell$ are nuisance objects because tile $\ell$ contains more than $M$ brighter objects, then there is a dependence between the sets of interesting objects $z_\ell$ and the nuisance objects $e_\ell$.

\cref{fig:bn} gives an example of Bayesian networks for the generative model and the variational distribution.

\paragraph*{Comparing independence assertions}
We now compare the independence assertions of $q(z \mid x)$, as encoded by $G_q$, and $p(z \mid x)$, as encoded by $G_p$.
Only $G_p$ contains nodes for the nuisance objects,  $\{e_\ell : \ell \in \Omega\}$, which need to be marginalized over to conduct the comparison.
Unfortunately, performing variable elimination of the nuisance objects, e.g., via the sum-product algorithm~\citep{koller2009probabilistic}, introduces complex dependence among the remaining variables, which complicates analysis.
Fortunately, in the prior, for any $\ell \in \Omega$, the dependence between $z_\ell$ and $e_\ell$ becomes arbitrarily weak as $M$ increases.
Because we intend for the tile size to be small (e.g., 4 pixels), $M$ is typically large enough that the large $M$ setting is an interesting one.
Hence, we consider the setting in which $G_p^-$, the graph obtained by removing the edges of the form $(z_\ell, e_\ell)$ from $G_p$, is a Bayesian network of $p(z, e, x)$.
For $G_p^-$, variable elimination of the nuisance objects introduces dependencies among the pixels, but it does not link the remaining latent variables.

Our result that follows relies on the concept of a minimal I-map \citep{koller2009probabilistic}, which we first define.

\begin{figure}
    \centering
    \includegraphics[width=0.3\linewidth]{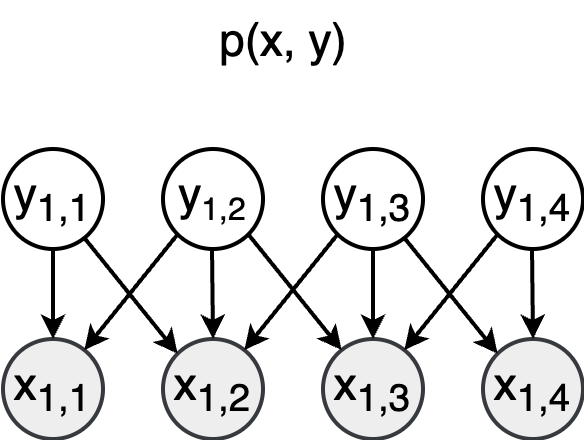}
    \hfill
    \includegraphics[width=0.3\linewidth]{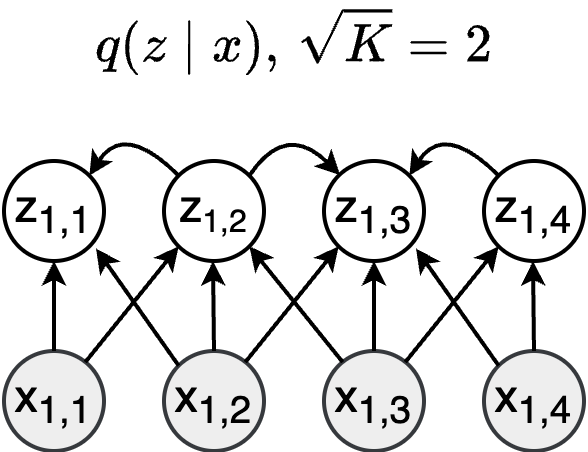}
    \hfill
    \includegraphics[width=0.3\linewidth]{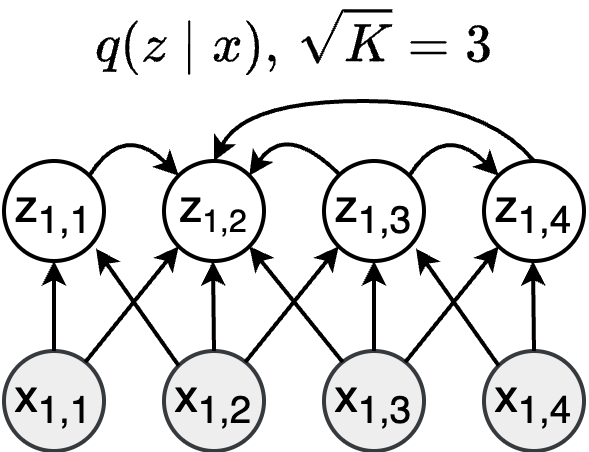}
    \caption{The Bayesian networks for a 1$\times$4-pixel image with tile size $T=1$ and object radius $R=1$. (Left) The generative model. (Center) The variational distribution with $K=4$, $r_{\mathcal X} = 1$, and $r_{\mathcal N} = 2$. (Right) The same variational distribution except with $K = 9$. Now, with an assumption, the variational distribution is minimally faithful stochastic inverse of the model.
    }
    \label{fig:bn}
\end{figure}

\begin{definition}
Let $G$ be a Bayesian network structure with a set of independencies $I(G)$. 
Let $P$ be a distribution and let $I(P)$ be the set of independence assertions that hold in $P$.
We say that $G$ is an I-map for $P$ if $I(K) \subset I(P)$.
\end{definition}

Note that if $P$ is a conditional distribution, it can make conditional independence assertions that involve the variables conditioned on, but not those that exclusively pertain to the variables conditioned on.

\begin{definition}
Let $G$ be a Bayesian network structure and let $P$ be a distribution.
$G$ is a minimal I-map for $P$ if it is an I-map for $P$ and
if the removal of any edge from $G$ renders it not an I-map.
\end{definition}

Recall that $G_q$ is parameterized by the number of ranks $K$ and the sizes of the receptive fields, $r_{\mathcal N}$ and $r_{\mathcal X}$. If these values are large enough, $G_q$ is an I-map for the posterior distribution. The following result provides values of $K$, $r_{\mathcal N}$, and $r_{\mathcal X}$ that are no larger than necessary.

\begin{restatable}[]{theorem}{qisfaithful}
\label{q-is-faithful}
Suppose that $G_p^-$ is an I-map for $p(x, z, e)$. Further, suppose that
$\sqrt{K} = \lceil 2R / T + 1 \rceil$, $r_{\mathcal X} = \lceil R \rceil$, and $r_{\mathcal N} = \lceil 2R / T \rceil$.
Then, $G_q$ is a minimal I-map for $p(z \mid x)$.
\end{restatable}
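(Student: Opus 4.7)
My plan is to break the claim into the I-map part and the minimality part. For the I-map part, I would show directly that $p(z \mid x)$ admits the factorization encoded by $G_q$, since by the DAG factorization theorem this implies every d-separation in $G_q$ yields a conditional independence in $p(z \mid x)$. The target factorization is
$$p(z \mid x) = \prod_{k=1}^K \prod_{\ell \in C_k} p\bigl(z_\ell \bigm| z_{<k} \cap z_\ell^{r_{\mathcal N}},\, x_{(\ell-1)T+1}^{r_{\mathcal X}}\bigr),$$
and I would derive it in three steps: rank-wise chain rule, within-rank independence, and per-factor locality in both $z_{<k}$ and $x$.

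The chain rule step is automatic. Within-rank independence would come from \cref{posterior-ind}: for any two tiles $\ell_1, \ell_2 \in C_k$ the separation $\|\ell_1 - \ell_2\|_\infty \ge \sqrt K \ge 2R/T + 1$ satisfies its hypothesis, giving $y_{\ell_1} \ind y_{\ell_2} \mid y_{\Omega \setminus \{\ell_1, \ell_2\}}, x$, which I would then strengthen into a $z$-only statement conditioning on $z_{<k}$ alone, leveraging the $G_p^-$ I-map assumption to keep the $z$ and $e$ factorizations separated after marginalizing $e$. For per-factor locality in $z$, I would apply \cref{posterior-ind} again with $A = \{\ell\}$ and $C$ equal to the pre-rank tiles outside $z_\ell^{r_{\mathcal N}}$; their tile distance to $\ell$ exceeds $r_{\mathcal N} = \lceil 2R/T \rceil$, hence is at least $2R/T + 1$. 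Locality in $x$ follows from the radius-$R$ support of $\xi_{h,w}$ together with $r_{\mathcal X} = \lceil R \rceil$, which together ensure that $z_\ell$ is conditionally independent of all pixels outside $x_{(\ell-1)T+1}^{r_{\mathcal X}}$ given the remaining conditioning.

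For minimality I would proceed edge by edge. Each $z$-to-$z$ edge $(z_\ell, z_{\ell'})$ in $G_q$ satisfies $\|\ell - \ell'\|_\infty \le r_{\mathcal N}$, so the pixel neighborhoods of the two tiles overlap and they jointly influence at least one common pixel; deleting the edge would assert $z_\ell \ind z_{\ell'}$ conditional on the remaining parents of $z_{\ell'}$ and $x$, which fails generically because the shared pixel couples them by explain-away. For each $x$-to-$z$ edge $(x_a, z_\ell)$ the pixel $a$ lies within $z_\ell$'s radius-$R$ influence region, so $x_a$ remains informative about $z_\ell$ even conditional on all the $z$-parents, and removal would falsely assert $z_\ell \ind x_a$ given the remaining parents.

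The main obstacle I expect is the within-rank independence step. \cref{posterior-ind} is a statement about $y = z \cup e$ conditioning on all other tiles, whereas $G_q$'s factorization demands a $z$-only statement conditioning only on $z_{<k}$. Marginalizing $e$ can in principle re-couple $z$-tiles, and dropping $z_{>k}$ and the other rank-$k$ tiles from the conditioning set can further destroy conditional independence. The $G_p^-$ I-map hypothesis --- equivalently, that $z$ and $e$ are a priori independent with tile-local priors --- is the lever I would exploit: under it, the $\varphi_{AB}$, $\varphi_B$, $\varphi_{BC}$ decomposition from the proof of \cref{posterior-ind} can be split along disjoint $e$-subsets following the same $(A, B, C)$ partition, so the $e$-integration proceeds factor by factor without re-coupling $z_A$ and $z_C$. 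Formalizing this disjoint integration --- particularly handling $e$-tiles in $B$ that contribute to pixels in $A^+$ or $C^+$ --- is the technical crux.
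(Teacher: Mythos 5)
Your skeleton matches the paper's: establish the I-map property by deriving the factorization of $p(z \mid x)$ that defines $G_q$ (chain rule over ranks, within-rank factorization, then locality in $z_{<k}$ and in $x$), and then establish minimality edge by edge, using a shared child pixel for $z$--$z$ edges and the radius-$R$ influence region for $x$--$z$ edges. Your minimality argument is essentially the paper's. The difference lies in how you reach the within-rank and locality steps, and that is also where the unresolved issue sits. You propose to invoke \cref{posterior-ind} and then ``strengthen'' it; but \cref{posterior-ind} asserts $y_A \ind y_C \mid y_B, x$ with $B = \Omega \setminus (A \cup C)$, i.e., it conditions on the \emph{entire} latent content (interesting and nuisance objects alike) of every other tile. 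The factorization you need conditions only on $z_{<k}$ and $x$: the nuisance objects $e$ are marginalized everywhere, and the $z$-variables of rank $\ge k$ (other than the tile in question) are marginalized too. Conditional independence is not preserved under shrinking the conditioning set, so \cref{posterior-ind} cannot be applied as a black box; you correctly flag this as the crux but leave it open, proposing to redo the $\varphi_{AB}\varphi_B\varphi_{BC}$ decomposition with $e$ split out. That could in principle be made to work, but it amounts to re-proving a refined version of \cref{posterior-ind} rather than using it, and the bookkeeping for $e$-tiles whose influence regions straddle $A^+$, $B^-$, and $C^+$ is exactly the part you have not written down.

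The paper closes this step with a different and cleaner device, which is the idea missing from your proposal. Under the hypothesis that $G_p^-$ is an I-map (so there are no $z_\ell \to e_\ell$ edges), each $e_\ell$ has only pixels as children, so variable elimination of all the $e_\ell$ adds edges only \emph{among the pixels} --- which are observed --- and produces a graph $G_z$ over $\{z_\ell\} \cup \{x_a\}$ in which each $z_\ell$'s children are the pixels within $\lceil R \rceil$ of its tile and its co-parents are the tiles within $\lceil 2R/T \rceil$. The within-rank factorization then follows because $\sqrt{K} \ge 2R/T + 1$ guarantees no two same-rank tiles share a child in $G_z$, and the locality steps follow because the conditioning set $z_{<k} \cap z_\ell^{\lceil 2R/T\rceil}$ together with $x_{(\ell-1)T+1}^{\lceil R\rceil}$ contains a Markov blanket of $z_\ell$ in $G_z$. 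This graph-level argument handles the marginalization of $e$ and the reduction of the conditioning set in one stroke, which is precisely what your density-level route still owes. I would recommend either adopting the $G_z$ construction or fully writing out the split $e$-integration; as it stands, the I-map half of your proof is a plausible plan with its hardest step unexecuted.
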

\begin{proof}
\label{q-is-faithful-proof}
Recall that $G_p^-$ has edges
\begin{align*}
    E_p^- \coloneqq &\{(z_\ell, x_a) : \|a - (T(\ell - 1) + 1)\|_\infty \le \lceil R \rceil, \,\, \forall  a \in \Upsilon,\,\, \forall \ell \in \Omega\}\\
    &\cup \{(e_\ell, x_a) : \|a - (T(\ell - 1) + 1)\|_\infty \le \lceil R \rceil, \,\, \forall  a \in \Upsilon,\,\, \forall \ell \in \Omega\}.
\end{align*}
Performing variable elimination on $G_p^-$ to marginalize over the nuisance objects 
exclusively adds edges between the pixel intensities, which are observed random variables, yielding the graph $G_z$ with vertices $V_q$ and edges
\begin{align*}
    E_z \coloneqq &\{(z_\ell, x_a) : \|a - (T(\ell - 1) + 1)\|_\infty \le \lceil R \rceil, \,\, \forall  a \in \Upsilon,\,\, \forall \ell \in \Omega\}\\
    &\cup \{(x_a, x_a') : a \preceq a', \,\, \forall \,\, a, a' \in \Upsilon \},
\end{align*}
where $\preceq$ denotes comparison under an arbitrary but fixed order.
The precise order is inconsequential for what follows.

Now, to see that $G_q$ is an I-map for $p(z \mid x)$, observe that
\begin{align}
    p(z \mid x)
    &= \prod_{k=1}^K p(z_k \mid z_{<k}, x) \label{eq:pzx_color}\\
    &= \prod_{k=1}^K \prod_{\ell \in C_k} p(z_\ell \mid z_{<k}, x) \label{eq:pzx_ind_within_color}\\
    &= \prod_{k=1}^K \prod_{\ell \in C_k} p(z_\ell \mid z_{<k} \cap z_{\ell}^{\lceil 2R/T \rceil}, x) \label{eq:pzx_coparents} \\
    &= \prod_{k=1}^K \prod_{\ell \in C_k} p(z_\ell \mid z_{<k} \cap z_{\ell}^{\lceil 2R/T \rceil}, x_{(\ell - 1)T + 1}^{\lceil R \rceil}) \label{eq:pzx_children}\\
    &= \prod_{k=1}^K \prod_{\ell \in C_k} p(z_\ell \mid z_{<k} \cap z_{\ell}^{r_{\mathcal N}}, x_{(\ell - 1)T + 1}^{r_{\mathcal X}}) \label{eq:pzx_subs}.
\end{align}
Here, \cref{eq:pzx_color} follows from the generalized product rule.
\cref{eq:pzx_ind_within_color} follows from our assumption that $\sqrt{K} > 2R/T$, which implies that nearby tiles, which have a common child in $G_z$, have different ranks.
\cref{eq:pzx_coparents} follows because $z_{\ell}^{\lceil 2R/T \rceil}$ contains all the tile latent variables that are co-parents of $z_\ell$ in $G_z$, and thus the conditioned-on latent variables constitute a Markov blanket.
\cref{eq:pzx_children} follows because the removed pixel intensities, i.e., those not in $x_{(\ell - 1)T + 1}^{\lceil R \rceil}$, are not children of $z_\ell$ in $G_z$.
\cref{eq:pzx_subs} follows by substitution.
We have thus expressed $p(z \mid x)$ in precisely the form of $q(z \mid x)$ in \cref{eq:overall_q},
whose factorization defined the Bayesian network $G_q$, thus establishing that $G_q$ is an I-map for $p(z \mid x)$.

To further establish that $G_q$ is minimal, consider removing an arbitrary edge from $E_q$. 
If the edge connects a pixel intensity $x_a$ to a tile latent variable $z_\ell$, then its removal would assert that $x_a$ and $z_\ell$ are independent given all other random variables; but the reversed edge, $(z_\ell, x_a)$, is contained in $G_z$, so this independence assertion does not hold for $p(z \mid x)$.
Alternatively, if the removed edge connects tile latent variables $z_\ell$ and $z_{\ell'}$,
then $\| \ell - \ell' \|_\infty \le r_{\mathcal N} = \lceil 2R / T \rceil$.
Thus, in $G_z$, $z_\ell$ and $z_{\ell'}$ are co-parents of one or more pixels, for example, $x_a$ where $a = \lceil T(\ell + \ell') / 2 \rceil$,
which implies that removing the edge renders $G_q$ not an I-map for $p(z \mid x)$.
\end{proof}

This result establishes the properties that make this variational family well suited to the problem; namely, that the conditional independence assertions of the variational distribution, given particular choices of $K$ and particular convolutional neural network architectures, are shared by the generative model.

\paragraph*{Implications}

This result also has implications for setting the number of ranks $K$ and the architecture of the inference network, which in turn determines the receptive field sizes $r_{\mathcal X}$ and $r_{\mathcal N}$, and the tile size $T$.
In particular, it is inadvisable to set $K$, $r_{\mathcal X}$, and $r_{\mathcal N}$ to larger values than prescribed by \cref{q-is-faithful}; doing so would give the variational distribution flexibility that is not needed, and thus would place the burden of learning to ignore irrelevant inputs on the fitting procedure, which may slow optimization and decrease the quality of the solution found in a feasible duration.

There is a drawback to setting $K$, $r_{\mathcal X}$, and $r_{\mathcal N}$ to smaller values than prescribed by \cref{q-is-faithful}. However, doing so may nevertheless be worthwhile: severing some conditional dependencies that could be present in the posterior lets us trade fidelity for computational efficiency, which is a common practice in VI. Neighboring tiles tend to have the strongest dependence; conditional dependence will be much weaker once immediate neighbors are conditioned on.

Another implication is that the number of ranks $K$ needed grows with the object size $R$ rather than the image size ($H \times W$), which is critical for processing high-resolution images. By keeping $K$ small, we can parallelize sampling across GPU cores instead of requiring lengthy sequential computation.

The $K$ required also varies with the tile size $T$.
With larger tiles, sampling the variational distribution requires fewer sequential steps. However, larger tiles are expected to contain more objects, making the distribution of their contents more difficult to infer and sample.

\newpage
\section{Comparison of ELBO-based VI and NPE}
\label{sec:npe-elbo-comparison}

Neural posterior estimation (NPE) is not a widely appreciated approach to Bayesian inference, and there appear to be no articles providing comprehensive guidance on using it.
Because NPE is central to our method, we therefore discuss its advantages (\cref{sec:npe-advantages}) and drawbacks (\cref{sec:npe-drawbacks}) relative to traditional ELBO-based variational inference \citep[cf.][]{blei2017variational}.

\subsection{Advantages of NPE}
\label{sec:npe-advantages}
NPE has five important advantages relative to traditional ELBO-based variational inference.

\paragraph*{Versatile unbiased gradients}
Provided the model can be sampled, as it typically can, 
unbiased stochastic gradients of $\mathcal L_\mathrm{fwd}(\phi)$ can always be obtained through Equation~10 in the main text.
Hence, there is no need to use the reparameterization trick, as is often needed to minimize $\mathcal L_\mathrm{rev}(\phi)$; the ELBO averages with respect to $q_\phi$, so the averaging distribution depends on $\phi$~\citep{kingma2014auto}.
In contrast, NPE only averages with respect to $p(z, x)$, thus freeing us to use an expressive variational distribution that is not readily reparameterizable, as indeed ours is not.

\paragraph*{Likelihood-free inference}
Stochastic gradients of the NPE objective are based exclusively on samples from the joint distribution of the model;
it is unnecessary to evaluate the likelihood of complete data.
Although this likelihood is technically tractable for our model of astronomical images, it is convenient not to have to evaluate it. The popular GalSim package~\citep{rowe2015galsim}, which we use in our case studies, supports simulating images but not evaluating likelihoods.

\paragraph*{Implicit marginalization}
The NPE objective allows us to implicitly marginalize nuisance latent variables, such as random variables that describe nuisance objects, by simply excluding them from the variational distribution \cite[Theorem 2]{ambrogioni2019forward}. Note that excluding random variables from the variational distribution does not change the generative process. For our application, the synthetic images that we use to train the inference network reflect the presence of many objects that are too faint to detect individually, but affect the image in aggregate. The inference network is trained to ignore these faint objects, which implicitly marginalizes over them. 
This marginalization procedure is not equivalent to excluding faint objects from the generative model.
If we were to revise the generative model to exclude samples with more than $M$ objects per tile, then the generative model would be misspecified, as real astronomical images have, with some probability, regions with high object densities. 
The inference network may perform poorly for such images because they are out-of-distribution.

\paragraph*{Not underdispersed}
If the variational family includes the exact posterior, then the minimizers of both the reverse KL and the forward KL are identical (and both are identical to the posterior distribution). However, variational distributions, including ours, are essentially never flexible enough to include the exact posterior. Therefore, the minimizer selected will depend on the divergence minimized. Whereas the reverse KL tends to favor under-dispersed (``mode seeking'') variational distributions, the forward KL tends to favor over-dispersed (``mode covering'') ones \citep[Ch. 10]{bishop2006pattern}.
Typically the latter is preferable, as the costs of being overly confident in applications are greater than the costs of being conservative. For example, if the variational distribution is used downstream as the proposal for importance sampling, the forward KL bounds the effective sample size whereas the reverse KL does not~\citep{chatterjee2018sample}.

\paragraph*{Favorable optimization landscape}
The ELBO is highly nonconvex for object detection in astronomical images~\citep[Section 5]{liu2023variational} and object detection in natural images \citep{greff2019multi}.\footnote{\cite{greff2019multi} does the not explicitly discuss the convexity of the ELBO, but proposes a complex ELBO optimization scheme that presumably would have been unnecessary had the ELBO been well behaved.} Intuitively, this nonconvexity arises from the discrete nature of object detection: the gradient provides little or no signal to guide the optimizer towards objects that are entirely missed at an earlier iteration.
Therefore, optimizers can get stuck in shallow local optima regardless of the precise optimizer used---a long-standing problem in VI~\citep{blei2017variational}.

Recent work motivated by our experience with object detection has shown that minimizing the NPE objective with stochastic gradient descent is globally convergent~\citep{mcnamara2024globally}. This result is even more surprising when we consider that it holds in the amortized setting, in which the variational distribution is parameterized by a neural network, leading to a highly nonconvex optimization problem in the space of the neural network weights. 
The key to this result is analyzing the NPE objective using the neural tangent kernel~\citep{jacot2018neural}. (Recall that algorithmically, NPE objective minimization amounts to simulating data and training an inference network to solve the inverse problem, effectively a supervised problem with an arbitrarily large training set.)  This approach requires some assumptions, chiefly that the inference network architecture is infinitely wide, but \cite{mcnamara2024globally} presents compelling empirical evidence that the asymptotic regime is relevant to practice. No comparable guarantees are available for the ELBO. 

\subsection{Drawbacks of NPE}
\label{sec:npe-drawbacks}
We see four potential issues with NPE. Only the first (``overdispersion'') is widely appreciated. The second has only been reported recently. The third and fourth are novel insights.

\paragraph*{Overdispersion} 
Posterior approximations that minimize forward KL divergence tend to be overdispersed \citep[Ch. 10]{bishop2006pattern}.
While overdispersion is often preferable to underdispersion (see \cref{sec:npe-advantages}), it is potentially problematic nevertheless.
An overdispersed posterior approximation assigns excessive probability mass to regions that have low or even no mass under the exact posterior distribution.
Therefore, if the samples from the posterior approximation are used during inference, the occasional sample may be impossible, which could invalidate analysis that is sensitive to such outliers.

\paragraph*{Vulnerability to model misspecification}
Some degree of model misspecification is inevitable.
For example, \citet{regier2019approximate} observed that the standard bulge-and-disk galaxy model, which we also use here, does appear to misfit some galaxies, despite the homogenizing effect of convolution with the point spread function for ground-based images.
This is not surprising given the diversity of galaxy shapes, especially those of irregular galaxies, and it is not inherently problematic: imperfect models can be useful.

Traditional ELBO-based methods target an objective that is the sum of two terms: the reconstruction error, $\mathbb E_{q(z \mid x)}\, \log p(x \mid z)$, and an entropic regularizer, $D_{\mathrm{KL}}( q(z \mid x), p(z) )$. The reconstruction error term strongly favors posterior approximations that assign high likelihood to the data, regardless of model misspecification. For example, for an image containing a bright galaxy that is highly irregular, the reconstruction error term will favor posterior approximations that assign high probability to modeled galaxies that resemble the real one, with the remaining misfit ``soaked up'' by the background.

The same cannot be said for NPE in the presence of model misspecification. With NPE, the inference network is trained exclusively with synthetic data from the model. Only at prediction time does it encounter real data. At prediction time, even slight deviations between real data and the synthetic training data could cause the inference network to perform poorly: neural networks are known to be highly sensitive to distribution shift, particularly if the support of the data model differs from the support of the actual data. There are methods of adding noise to the training data to make networks trained with them more robust to domain shift. These methods are popular in fields such as deep learning for medical imaging and have recently been developed for NPE by~\citet{ward2022robust}.

\paragraph*{Malignant overfitting}
Neural networks that are trained with stochastic gradient descent to achieve zero training error---and thus to thoroughly overfit the training data---nevertheless frequently perform well in terms of test error.
This paradox, termed ``benign overfitting,'' has been widely observed and studied, typically in the context of mean squared error (MSE) loss \citep{zhu2023benign}. 

In neural posterior estimation, the loss function is not equivalent to MSE (except in the extremely rare case in which the variational family is made up of Gaussian distributions with fixed isotropic covariance).
Instead, the inference network infers multiple moments of the posterior distribution.
Overfitting is far from benign in this setting: once the network has memorized the training data, the predicted uncertainty of predictions goes to zero. This unbounded overconfidence leads to arbitrarily large loss for held-out data.

We term this potential problem ``malignant overfitting.'' For this work, because our data simulation procedure is relatively inexpensive and could in principle be performed on-the-fly during training (so that no training data point is used more than once), malignant overfitting may not be a major problem. It could become so, however, if we switch to a more computationally expensive simulator, such as PhoSim, which performs ray tracing~\citep{peterson2021phosim}.

\paragraph*{Exposure bias}
The term ``exposure bias'' was coined in the context of recurrent neural networks (RNNs) for language generation to describe the observation that RNNs trained with real text data to predict the next word in a sequence (``teacher forcing mode'') were not necessarily good at generating new text sequences (``free-running mode'') \citep[cf.][]{schmidt2019generalization,wang2020exposure,he2021exposure}. After generating a certain number of words, synthetic text sequences differ slightly from real (human-generated) text sequences. Once these small differences manifest, an RNN trained exclusively with real data and tasked with generating the rest of the words in a sequence is operating out-of-distribution: the conditioning data are unlike any that it has been trained with. The discrepancies thus begin to compound, often resulting in highly unrealistic text generation.

A similar dynamic can occur when fitting a hierarchical variational approximation with NPE. 
During training, which is based on complete data generated by ancestral sampling, the true data-generating realizations of the latent variables are known.
The variational distribution is fitted with this complete data to infer child latent variables from the values of their parents.
After training, the variational distribution can be a good predictor, but it is typically imperfect;  neural networks are not perfectly flexible, and we restrict the probabilistic form of the variational approximation.
Therefore, at inference time, to sample from the variational distribution, the parent node is sampled conditional on the data, and the child node is sampled conditional on this sample of the parent, whose distribution typically differs slightly from the prior. This slight difference can potentially have a major effect on the inference network, which at inference time is operating out of distribution.

To be clear, this is not an issue of model misspecification or overfitting. The validation error can continue to decrease as the variational distribution is fitted to make progressively better inferences about child latent variables, while exposure bias becomes more severe.

Remedies for exposure bias in RNNs for text generation may be applicable to NPE.
Scheduled sampling \citep{bengio2015scheduled} is a popular remedy that, with some frequency during training, replaces the true context information with a sample of the distribution being fitted. In this way, at inference time, the inference network is not run in an out-of-distribution context.
Unfortunately, as pointed out by \cite{huszar2015not}, scheduled sampling is not a consistent estimation strategy: the exact posterior is typically not the minimizer of the resulting objective.
\cite{huszar2015not} posits that the empirical success of scheduled sampling relies on incomplete optimization (e.g., getting stuck in a local optimum), which is an unsatisfying basis for inference.

\cite{lamb2016professor} propose to augment the likelihood-based training objective with an adversarial penalty to counteract exposure bias in RNNs. This penalty discourages representations of contextual information that distinguish training data from samples. This approach is appealing for NPE too, as it has the potential to correct for some forms of overdispersion (discussed above) by penalizing samples with more probability mass under the aggregate posterior than the prior.

\newpage

\section{Neural Network Architecture}

The inference network described in Section 4.3 of the main text processes astronomical images through a series of transformations to detect and characterize sources in a tile-based manner. Here we detail the network architecture of its three components: the image backbone, the neighborhood network, and the detection head.

\subsection{The image backbone}
The image backbone, shown in \cref{fig:image_backbone}, takes as input a minibatch of images, $X \in \mathbb{R}^{B \times 1 \times H \times W}$, where $B$ is the batch size, $1$ indicates that the image has a single channel (the r band), and $H$ and $W$ are the height (rows) and width (columns) of the images, in pixels. 

\begin{figure}[b]
    \centering
    \includegraphics[width=\linewidth]{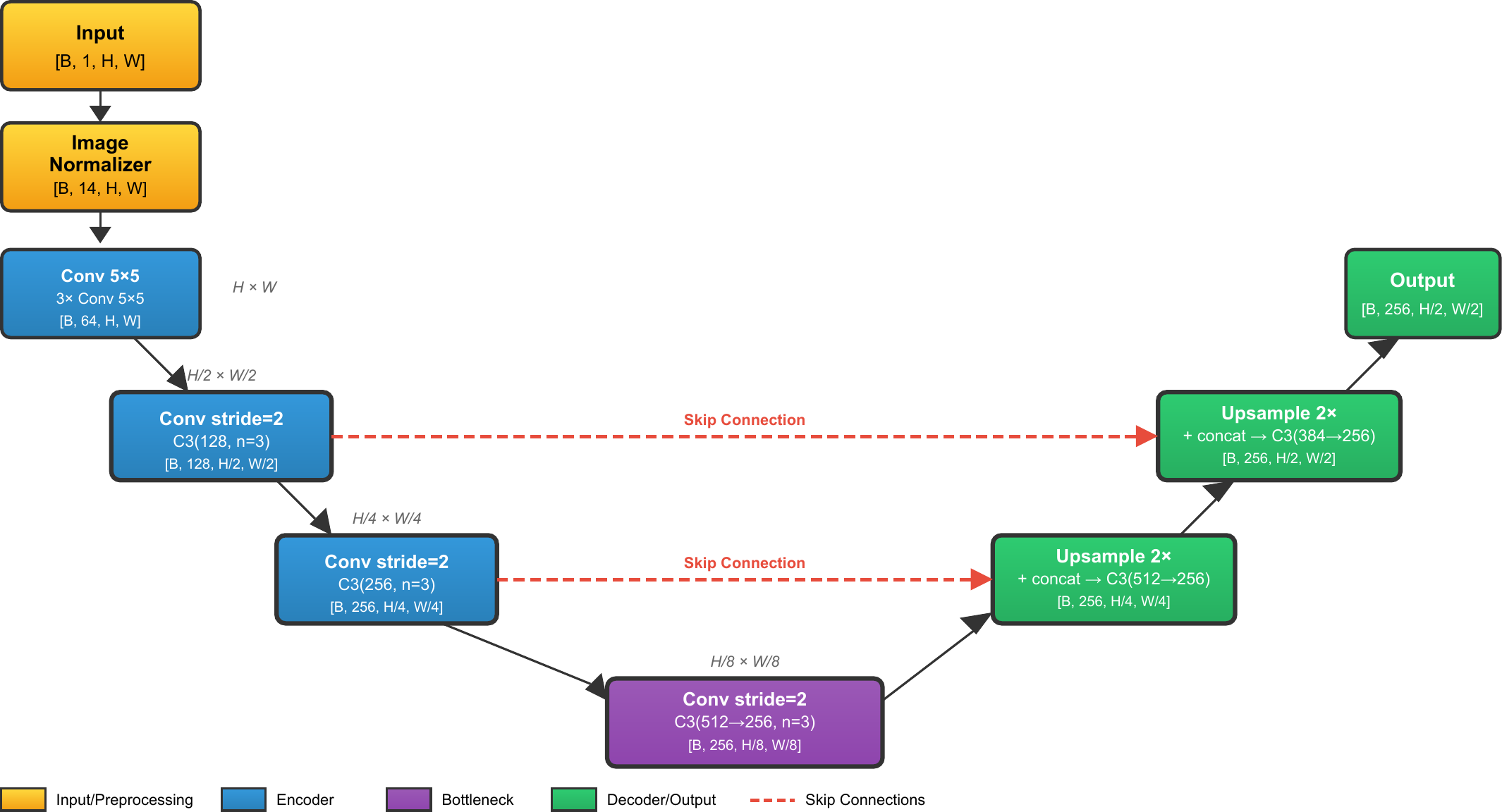}
    \caption{Image backbone architecture.}
    \label{fig:image_backbone}
\end{figure}

First, an image normalizer preprocesses the astronomical images using an inverse hyperbolic sine (asinh) transformation, which provides compression of large dynamic ranges while preserving fine details in low-intensity regions---a particularly useful property for astronomical imaging where pixel values can span several orders of magnitude from faint background to bright stellar sources. The normalizer produces 14 representations of each input image by applying 14 different elementwise affine transformations (scaling and shifting) to the pixels, followed by the asinh function. The amount of scaling and shifting for each representation is determined by quantiles computed from sky-subtracted input images.

Next comes the U-net. A series of 2D convolutional blocks process and downsample the feature maps while successively increasing the number of channels. Then, a series of paired upsampling operations run, with input coming both from the previous upsampling block and a skip connection to the corresponding downsampling block.

\subsection{The neighborhood network}
The neighborhood network, shown in \cref{fig:neighbor_network}, processes two distinct types of contextual information through separate pathways that are then combined to produce the final neighborhood context representation.

The neighboring-tile context pathway processes objects from tiles with ranks less than the current rank ($k=1,...,K$). The input consists of object properties and a rank-based tile mask that identifies which spatial positions contain objects from qualifying neighboring tiles. These inputs are fed into a ConvBlock with 3×3 kernels, producing a 128-channel feature map. This is followed by a sequence of three cross-stage partial blocks \citep{redmon2016you} that maintain the same dimensional output, and finally a ConvBlock with 1×1 kernels that preserves the dimensions.

\begin{figure}[t]
    \centering
    \includegraphics[width=0.9\linewidth]{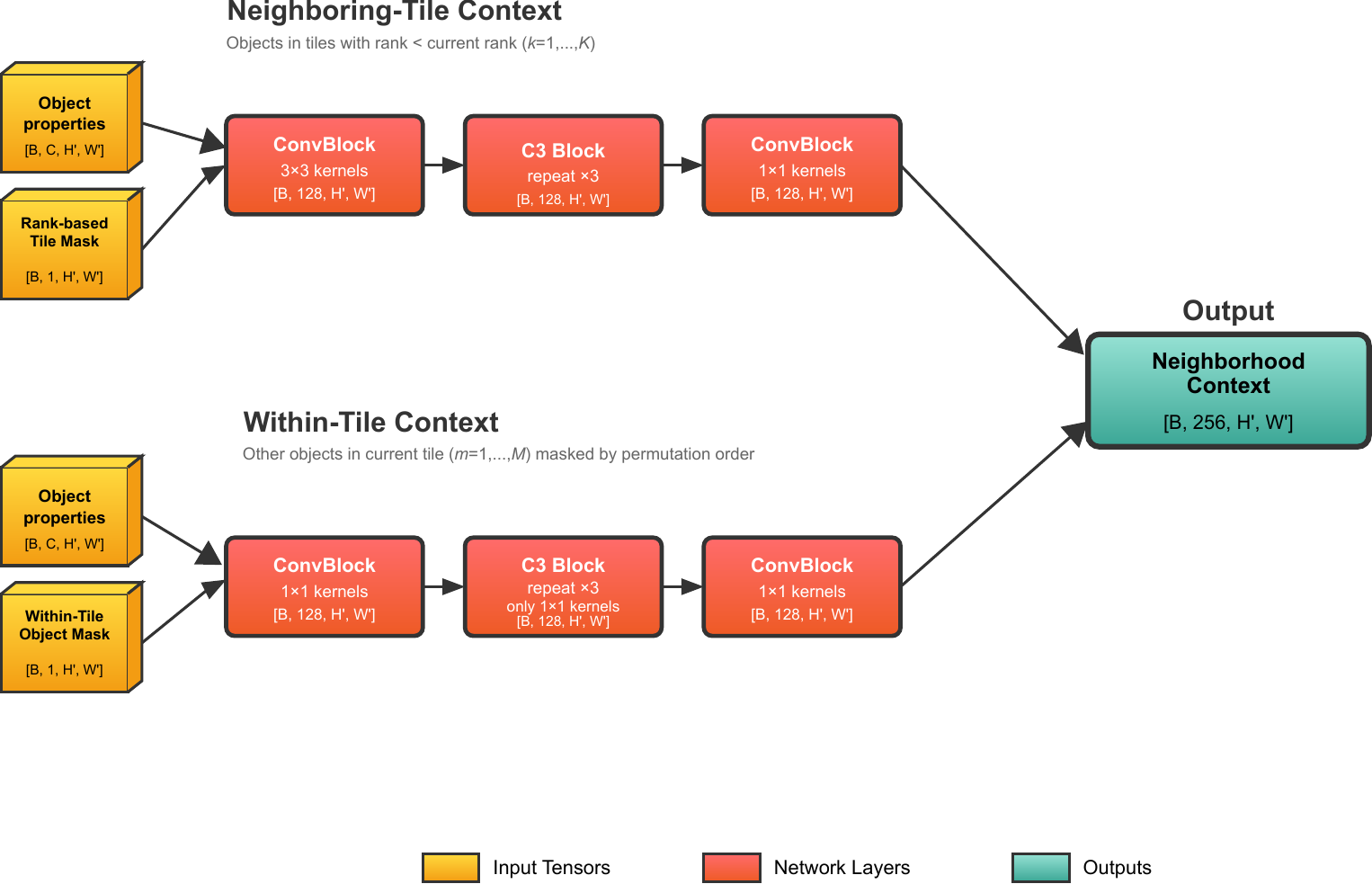}
    \caption{Neighborhood network architecture.}
    \label{fig:neighbor_network}
\end{figure}

The within-tile context pathway handles other objects within the current tile ($m=1,...,M$) that are masked according to the permutation order. Similar to the neighboring-tile pathway, it takes object properties and a within-tile object mask. However, this pathway uses only 1×1 kernels throughout: a ConvBlock with 1×1 kernels, followed by a C3 Block (repeated 3 times, using only 1×1 kernels), and a final ConvBlock with 1×1 kernels.

The features from both pathways are concatenated to produce the final neighborhood context tensor with 256 channels.
The architectural distinction between the two pathways reflects their different spatial scopes: the neighboring-tile pathway uses 3×3 kernels to capture spatial relationships across tile boundaries, while the within-tile pathway restricts itself to 1×1 kernels to respect the autoregressive ordering constraint within the current tile.

\subsection{The detection head}
The detection head, shown in \cref{fig:detection_head}, combines neighborhood context and image backbone features by concatenating them along the channel dimension. The concatenated features are then processed through a sequence of three convolutional stages using exclusively 1×1 kernels.
The first stage is a ConvBlock with 1×1 kernels that omits GroupNorm, to prevent information sharing across spatial positions. This is followed by a C3 Block that repeats three times, again using only 1×1 kernels and preserving the same dimensional structure. The final stage applies a 1×1 convolution with bias and includes a dimension permutation operation.
The output has dimension $[B, H', W', n\_params]$, where $n\_params$ represents the distributional parameters for potential objects at each spatial location. The exclusive use of 1×1 kernels ensures that parameters are computed independently for each spatial position, maintaining the autoregressive property while leveraging the  information from both input sources.
\begin{figure}[H]
    \centering
    \includegraphics[width=0.9\linewidth]{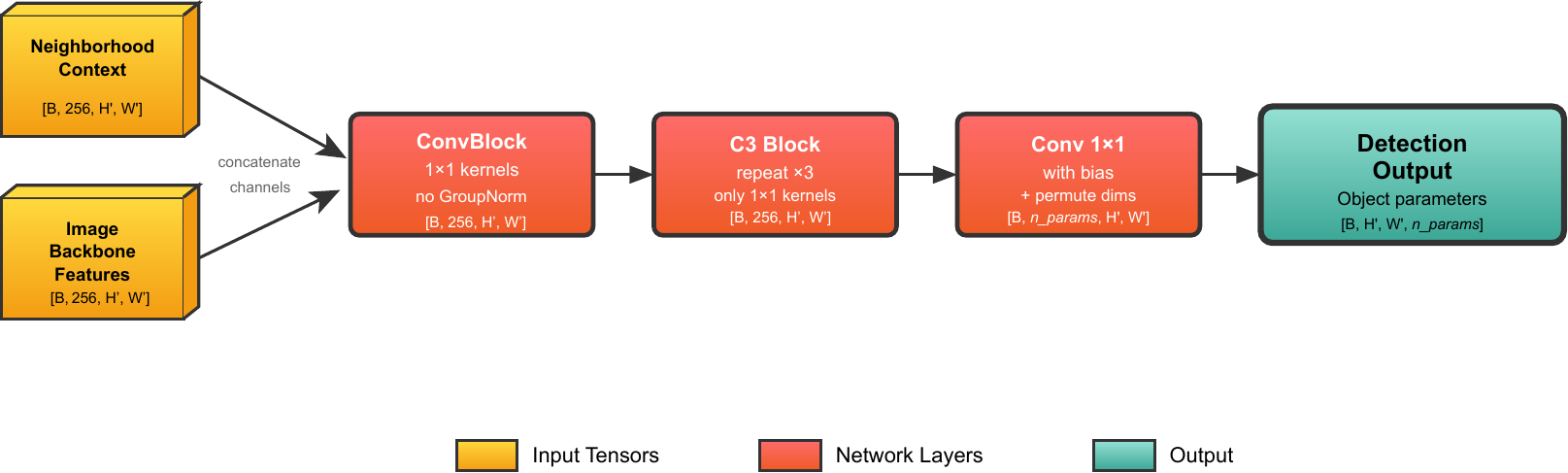}
    \caption{Detection head architecture.}
    \label{fig:detection_head}
\end{figure}

\newpage
\section{Overly rich conditioning information for Case Study 2}
\label{rich_conditioning}

\begin{figure}[H]
    \centering
    \includegraphics[width=0.48\linewidth]{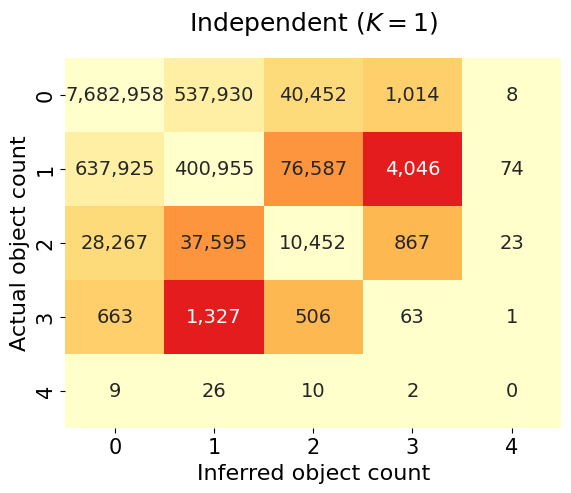}
    \hspace{10px}
    \includegraphics[width=0.48\linewidth]{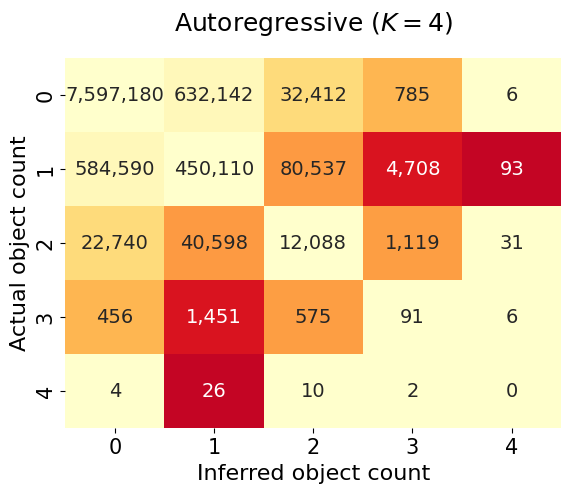}
    \caption{Confusion matrices for samples of catalogs from the BLISS variational distribution \textbf{with rich conditioning information} for synthetic images that mimic the SDSS image of the M2 globular cluster. For each catalog, whether inferred or actual, object count is the number of objects in a thin region along tile boundaries.
    Each cell's color indicates the difference between the entries of the corresponding transpose pair as a proportion of the smaller of the two entries (for transpose pairs with at least 100 observations in each cell).
    }
    \label{fig:m2_maximalist_confusion_matrices}
\end{figure}

\end{document}